\documentclass[11pt]{article}

% clashing with lipics, so commenting line below -- Rajmohan
\usepackage[margin=2.5cm]{geometry}
% clashing with lipics, so commenting line below -- Rajmohan
\usepackage{authblk}

\usepackage{amssymb,amsmath,amsthm,amsfonts,thmtools}
\usepackage{color}

\usepackage{yfonts}
\usepackage{mathrsfs}

\usepackage{etoolbox} %For AtBeginEnvironment

\usepackage{graphicx}
\usepackage[export]{adjustbox}
\usepackage{xspace}

\usepackage{times}
\usepackage{verbatim}
\usepackage{enumitem}
\usepackage{graphicx,wrapfig,lipsum}
% clashing with lipics, so commenting line below -- Rajmohan
\usepackage[font=small]{caption}
% clashing with lipics, so commenting line below -- Rajmohan
\usepackage{subcaption}
% clashing with lipics, so commenting line below -- Rajmohan
\usepackage[english]{babel}
\usepackage[utf8]{inputenc}

\usepackage[noend]{algpseudocode}
\usepackage{url}
\usepackage[all]{xy}
\usepackage{rotating}
\usepackage{ifpdf}
\usepackage{tcolorbox}
\usepackage{lipsum}

\usepackage{mdframed}             % \begin{framed}
\usepackage{xifthen} % \ifthenelse \isempty
\usepackage{mathtools}  % \coloneq

\DeclareMathAlphabet\mathbfcal{OMS}{cmsy}{b}{n}

\usepackage{comment}
\usepackage[T1]{fontenc}

%\usepackage{lineno}
%\linenumbers

\usepackage[hidelinks]{hyperref} 

%%%%%%%%%%%%%%%%%%%%%%%%%%%%%%%%%%%%%%%%%%%%%%%%%%%%%%%%%%%%%%%%%
% macros for various letters
%%%%%%%%%%%%%%%%%%%%%%%%%%%%%%%%%%%%%%%%%%%%%%%%%%%%%%%%%%%%%%%%%

\newcommand{\calI}{\ensuremath{\mathcal I}\xspace}

\newcommand{\calK}{\ensuremath{\mathcal K}\xspace}

%\newcommand{\aogonek}{{\c{a}}}

%%% Local Variables:
%%% mode: latex
%%% TeX-master: "0_0_main__short"
%%% End:

\colorlet{darkgreen}{green!45!black}

%%%%%%%%%%%%%%%%%%%%%%%%%%%%%%%%%%%%%%%%%%%%%%%%%%%%%%%%%%%%%%%%%%%%%%%%%%%%%%%%%%%
% problems
%%%%%%%%%%%%%%%%%%%%%%%%%%%%%%%%%%%%%%%%%%%%%%%%%%%%%%%%%%%%%%%%%%%%%%%%%%%%%%%%%%%

%%%%%%%%%%%%%%%%%%%%%%%%%%%%%%%%%%%%%%%%%%%%%%%%%%%%%%%%%%%%%%%%%%%%%%%%%%%%%%%%%%%
% general 
%%%%%%%%%%%%%%%%%%%%%%%%%%%%%%%%%%%%%%%%%%%%%%%%%%%%%%%%%%%%%%%%%%%%%%%%%%%%%%%%%%%

%%%%%%%%%%%%%%%%%%%%%%%%%%%%%%%%%%%%%%%%%%%%%%%%%%%%%%%%%%%%%%%%%
% theorems and such
%%%%%%%%%%%%%%%%%%%%%%%%%%%%%%%%%%%%%%%%%%%%%%%%%%%%%%%%%%%%%%%%%

\newtheorem{theorem}{Theorem}[section]
\newtheorem{lemma}{Lemma}[section]
\newtheorem{corollary}[lemma]{Corollary}

\newtheorem{observation}[lemma]{Observation}

 % numbered within lemma 
% \newtheorem{lemmaLemma}[lemmaClaim]{Lemma}[lemma]

 % numbered within theorem 

%use to fix case environments
\newcounter{case}
\newcounter{subcase}[case]
\newcounter{subsubcase}[subcase]
\newenvironment{case}[1][\unskip]{
\paragraph{Case \thecase:} #1.
}

\newenvironment{subcase}[1][\unskip]{
\paragraph{Case \thecase.\thesubcase:} #1.
}

\newenvironment{subsubcase}[1][\unskip]{
\paragraph{Case \thecase.\thesubcase.\thesubsubcase:} #1.
}

%Might need package etoolbox to work but should be fine
\AtBeginEnvironment{case}{\stepcounter{case}}
\AtBeginEnvironment{proof}{\setcounter{case}{0}}
\AtBeginEnvironment{subcase}{\stepcounter{subcase}}
%\AtBeginEnvironment{case}{\setcounter{subcase}{0}}
\AtBeginEnvironment{subsubcase}{\stepcounter{subsubcase}}
%\AtBeginEnvironment{subcase}{\setcounter{subsubcase}{0}}

%%%%%%%%%%%%%%%%%%%%%%%%%%%%%%%%%%%%%%%%%%%%%%%%%%%%%%%%%%%%%%%%%
% Notes and comments, formatting
%%%%%%%%%%%%%%%%%%%%%%%%%%%%%%%%%%%%%%%%%%%%%%%%%%%%%%%%%%%%%%%%%

\newcommand{\ignore}[1]{}

\newcommand{\margincomment}[2]%
{\marginpar{\footnotesize\raggedright {\color{red}#1}: #2}}

%Citation

%%%%%%%%%%%%%%%%%%%%%%%%%%%%%%%%%%%%%%%%%%%%%%%%%%%%%%%%%%%%%%%%%
% formatting
%%%%%%%%%%%%%%%%%%%%%%%%%%%%%%%%%%%%%%%%%%%%%%%%%%%%%%%%%%%%%%%%%

\newcommand{\myparagraph}[1]{{\medskip\noindent\textbf{#1}}}

%%%%%%%%%%%%%%%%%%%%%%%%%%%%%%%%%%%%%%%%%%%%%%%%%%%%%%%%%%%%%%%%%
% various delimiters and operators
%%%%%%%%%%%%%%%%%%%%%%%%%%%%%%%%%%%%%%%%%%%%%%%%%%%%%%%%%%%%%%%%%

\newcommand{\braced}[1]{{ \left\{ {#1} \right\} }}

\newcommand{\angled}[1]{{ \langle {#1} \rangle }}
\newcommand{\brackd}[1]{{ \left[ {#1} \right] }}

\newcommand{\parend}[1]{{ \left({#1} \right) }}
\newcommand{\barred}[1]{{ \left| {#1} \right| }}
\newcommand{\suchthat}{{\;:\;}}
 % use \gets instead ? -NY

%%%%%%%%%%%%%%%%%%%%%%%%%%%%%%%%%%%%%%%%%%%%%%%%%%%%%%%%%%%%%%%%%
% math symbols
%%%%%%%%%%%%%%%%%%%%%%%%%%%%%%%%%%%%%%%%%%%%%%%%%%%%%%%%%%%%%%%%%

\newcommand{\Keys}{\calK}

\newcommand{\threeWCST}{\textsc{3wcst}\xspace}
\newcommand{\twoWCST}{\textsc{2wcst}\xspace}

\DeclareMathOperator{\cost}{cost}
\DeclareMathOperator{\depth}{depth}
\newcommand{\highlambda}{\lambda^+}
\newcommand{\lowlambda}{\lambda^-}
\newcommand{\lowgamma}{\gamma^-}
\newcommand{\highgamma}{\gamma^+}

\newcommand{\equalstest}[1]{\angled{=#1}}
\newcommand{\lessthantest}[1]{\angled{<#1}}

%%%%%%%%%%%%%%%%%%%%%%%%%%%%%%%%%%%%%%%%%%%%%%%%%%%%%%%%%%%%%%%%%
% fractions
%%%%%%%%%%%%%%%%%%%%%%%%%%%%%%%%%%%%%%%%%%%%%%%%%%%%%%%%%%%%%%%%%

\newcommand{\half}{\textstyle{\frac{1}{2}}}
\newcommand{\onethird}{\tfrac{1}{3}}

\newcommand{\threefourths}{\tfrac{3}{4}}
\newcommand{\onefourth}{\tfrac{1}{4}}
\newcommand{\twofifths}{\tfrac{2}{5}}
\newcommand{\onesixth}{\tfrac{1}{6}}
\newcommand{\threesevenths}{\tfrac{3}{7}}

\newcommand{\fournineths}{\tfrac{4}{9}}

%%%%%%%%%%%%%%%%%%%%%%%%%%%%%%%%%%%%%%%%%%%%%%%%%%%%%%%%%%%%%%%%%
% formatting
%%%%%%%%%%%%%%%%%%%%%%%%%%%%%%%%%%%%%%%%%%%%%%%%%%%%%%%%%%%%%%%%%

\newcommand{\ListLengths}{\setlength{\itemsep}{0ex}\setlength{\topsep}{1ex}\setlength{\partopsep}{0ex}}

%% No indentation for enumeration

%% No indentation for the bullet.  The text is indented.  Nominal spacing.

%% No indentation for the bullet.  The text is indented.  Little spacing.

\graphicspath{{./}}

\usepackage{array}
\usepackage[title]{appendix}
\usepackage{graphicx}
\usepackage{tikz}
\usetikzlibrary{automata,positioning} \usepackage{dsfont} \usepackage{xspace}
\usepackage[nothing]{algorithm}
\usepackage{algorithmicx}
\usepackage[noend]{algpseudocode}
\algblockdefx[protocol]{Protocol}{EndProtocol}{\textbf{Protocol}\ }{}

\floatname{algorithm}{Pseudocode}

%%%%%%%%%%%%%%%%%%%%%%%%%%%%%%%%%%%%%%%%%%%%%%%%%%%%%%%%%%%%%%%%%%%%%%%%%%%%%%
%%%%%%%%%%%%%%%%%%%%%%%%%%%%%%%%%%%%%%%%%%%%%%%%%%%%%%%%%%%%%%%%%%%%%%%%%%%%%%
%%%%%%%%%%%%%%%%%%%%%%%%%%%%%%%%%%%%%%%%%%%%%%%%%%%%%%%%%%%%%%%%%%%%%%%%%%%%%%

\title{Structural Properties of Search Trees with 2-way Comparisons\thanks{Research supported by NSF grant CCF-2153723.}}

\author[$\dagger$]{Sunny Atalig}
\author[$\dagger$]{Marek Chrobak}
\author[$\ddagger$]{Erfan Mousavian}
\author[$\S$]{Jiri Sgall}
\author[$\S$]{Pavel Vesely}

\affil[$\dagger$]{University of California at Riverside, USA}
\affil[$\ddagger$]{Sharif University of Technology, Tehran, Iran}
\affil[$\S$]{Charles University, Prague, Czech Republic}

%%%%%%%%%%%%%%%%%%%%%%%%%%%%%%%%%%%%%%%%%%%%%%%%%%%%%%%%%%%%%%%%%%%%%%%%%%%%%%
%%%%%%%%%%%%%%%%%%%%%%%%%%%%%%%%%%%%%%%%%%%%%%%%%%%%%%%%%%%%%%%%%%%%%%%%%%%%%%
%%%%%%%%%%%%%%%%%%%%%%%%%%%%%%%%%%%%%%%%%%%%%%%%%%%%%%%%%%%%%%%%%%%%%%%%%%%%%%

\begin{document}

\maketitle

\begin{abstract}
Optimal 3-way comparison search trees (\threeWCST's) can be computed using
standard dynamic programming in time $O(n^3)$, and this can be further improved
to $O(n^2)$ by taking advantage of the Monge property.
In contrast, the fastest algorithm in the literature
for computing optimal 2-way comparison search
trees (\twoWCST's) runs in time $O(n^4)$.  To shed light on this discrepancy,
we study structure properties of \twoWCST's. On one hand, we show
some new threshold bounds involving key weights that can be helpful in
deciding which type of comparison should be at the root of the optimal tree.
On the other hand, we also show that the standard techniques for
speeding up dynamic programming (the Monge property / quadrangle inequality)
do not apply to \twoWCST's.
\end{abstract}

%%%%%%%%%%%%%%%%%%%%%%%%%%%%%%%%%%%%%%%%%%%%%%%%%%%%%%%%%%%%%%%%%%%%%%%%%%%%%%
%%%%%%%%%%%%%%%%%%%%%%%%%%%%%%%%%%%%%%%%%%%%%%%%%%%%%%%%%%%%%%%%%%%%%%%%%%%%%%
%%%%%%%%%%%%%%%%%%%%%%%%%%%%%%%%%%%%%%%%%%%%%%%%%%%%%%%%%%%%%%%%%%%%%%%%%%%%%%

\section{Introduction}
\label{sec: introduction}

%%%%%%%%%%%%%%%%%%%%%%%%%%%%%%%%%%%%%%%%%%%%%%%%%%%%%%%%%%%%%%%%%%%%%%%%%%%%%%
%%%%%%%%%%%%%%%%%%%%%%%%%%%%%%%%%%%%%%%%%%%%%%%%%%%%%%%%%%%%%%%%%%%%%%%%%%%%%%

%\section{Introduction}
%\label{sec: introduction}
%\input{01_introduction.tex}

%%%%%%%%%%%%%%%%%%%%%%%%%%%%%%%%%%%%%%%%%%%%%%%%%%%%%%%%%%%%%%%%%%%%%%%%%%%%%%
%%%%%%%%%%%%%%%%%%%%%%%%%%%%%%%%%%%%%%%%%%%%%%%%%%%%%%%%%%%%%%%%%%%%%%%%%%%%%%

We have a linearly ordered set $\Keys$ of $n$ keys.
We consider \emph{two-way comparison search trees} (\twoWCST's), 
namely decision trees whose internal nodes
represent equal-to comparisons or less-than comparisons to keys.
Each internal node in a \twoWCST has two children, one corresponding to
the ``yes" answer and the other to the ``no" answer.  Each leaf represents one key from $\Keys$.

Fix some \twoWCST $T$. When searching in $T$, we are given a query value $q$ that we assume
to be one of the keys%
\footnote{We remark that this is sometimes referred to as the
\emph{successful-query model}. In the \emph{general model}, a query $q$ can be
any value, and if $q\notin\Keys$ 
then the search for $q$ must end in the leaf representing the
inter-key interval containing $q$. Algorithms
for the succcessful-query model typically extend naturally to the
general model without affecting the running time.
}, 
that is $q \in \Keys$. 
We search for $q$ in $T$ as follows: starting at the root, $q$ is 
compared to the key in the current node (an equal-to or less-than test),
and the search then proceeds to its child according to the result of the comparison.
For each $q$, the search must end up in the correct leaf, namely
if $q = k$ then the search must end up in the leaf labelled $k$. 

Each key $k\in\Keys$ is assigned some weight $w_k \ge 0$. We define the cost
of $T$ by $\cost(T) = \sum_{k\in\Keys}^n w_k\cdot \depth(k)$, where
the depth  of a key $k$ is the distance (that is, the number of comparisons) from the leaf of $k$ to the root.
This cost can be also comptued as follows: Let $w(v)$ be the
total weight of the leaves in the subtree rooted at a node $v$.
Then the cost of $T$ is $\cost(T) = \sum_{v\in T'} w(v)$, where
$T'$ is the set of the internal (comparison) nodes of $T$.
The objective is to compute a \twoWCST $T$ for the given instance that minimizes $\cost(T)$.

\begin{figure}[ht]
\begin{center}
\includegraphics{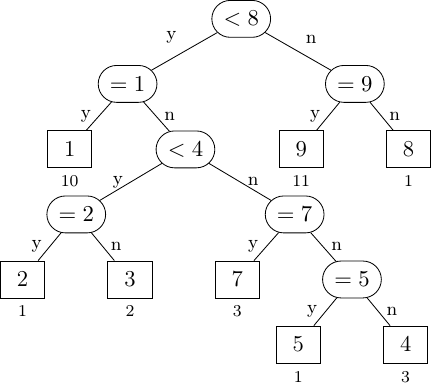} 
\caption{An example of a binary search tree, with
items $1,2,3,4,5,7,8,9$. If the weights of
these items are, respectively, $10,1,2,  3,1,3,  1,11$,
the the cost of the tree is
$10\cdot 2 + 1 \cdot 4 + 2\cdot 4 + 3 \cdot 5 + 1 \cdot 5 
	+ 3\cdot 4 + 1 \cdot 2 + 11 \cdot 2 = 88$.
Adding the weights of the internal leaves, layer by layer,
we get $32 + 20 + 12 + 10 + 3  + 7 + 4  = 88$.
}
\end{center}
\end{figure}

The problem of computing optimal search trees, or computing optimal decision trees in general,
is one of classical  problems in the theory of algorithms. Initial research in this area, in 1960's and 70's, 
focussed on 3-way comparison trees (\threeWCST). Optimal {\threeWCST}'s
can be computed in time $O(n^3)$ using standard dynamic programming,
and Knuth~\cite{Knuth1971} and Yao~\cite{yao_efficient_dynamic_programming_80} developed a faster, $O(n^2)$-time 
algorithm by showing that such optimal trees satisfy what is now referred
to as the quadrangle inequality or the Monge property. 

The standard dynamic programming approach does not apply to \twoWCST's.
After earlier attempts in~\cite{Spuler1994Paper,Spuler1994Thesis}, that 
were shown to be flawed~\cite{chrobak_huang_2022},
Anderson et al.~\cite{Anderson2002} developed an algorithm with running time $O(n^4)$.
Their algorithm was later simplified in~\cite{chrobak_simple_2021}.
(See~\cite{chrobak_simple_2021} also for a more extensive summary of the literature and related results.)

Thus, in spite of apparent similarity between these two problems and past research efforts,
the running time of the best algorithm for optimal \twoWCST's is still two
orders of magnitude slower than the one for {\threeWCST}'s.
To shed light on this discrepancy, in this paper we study structure properties of \twoWCST's.
Here is a summary of our results:
\begin{itemize}
\item An algorithm for designing optimal \twoWCST's needs to determine, in particular,
whether the root node should use the equal-to test or the less-than test.
Let $W$ be the total key weight.
The intuition is that there should exist thresholds $\lowlambda$ and $\highlambda$
on key weights with the following properties:
(i) if the heaviest key weight exceeds $\highlambda\cdot W$
then some optimum tree starts with the equal-to test in the root, and
(ii) if the heaviest key weight is below $\lowlambda\cdot W$
then each optimum tree starts with the less-than test.
Anderson et al.~\cite{Anderson2002} confirmed this intuition and 
proved that $\highlambda\in [\threesevenths,\fournineths]$ and $\lowlambda = \onefourth$. 
This was recently refined in a companion paper~\cite{atalig_chrobak_tight_threshold_23}
that shows that $\highlambda  = \threesevenths$.
In Section~\ref{sec: threshold results} we provide additional threshold bounds,
one involving \emph{two} largest weights, the other involving so-called
\emph{side weights}, and establish some tightness results.
\item The dynamic programming formulation for computing
optimal {\twoWCST}'s is based on a recurrence (see Section~\ref{sec: preliminaries})
involving $O(n^3)$ sub-problems, resulting in an $O(n^4)$-time algorithm~\cite{Anderson2002,chrobak_simple_2021}. 
For many weight sequences the number of sub-problems that need to be considered is
substantially smaller than $O(n^3)$, but there are also sequences for which no methods
are known to reduce the number of sub-problems to $o(n^3)$. Such
sequences involve geometrically increasing weights, say $1,\gamma,\gamma^2,...$.
In Section~\ref{sec: structure results for geometric sequences} we give a
threshold result for geometric sequences, by proving that
for $\gamma < \lowgamma = (\sqrt{5}-1)/2 \approx 0.618$ there is an optimal
tree starting with the equal-to test in the root.
(Note that $\lowgamma$ corresponds to the weight threshold $1-\lowgamma \approx 0.382$,
which is smaller than $\highlambda = \threesevenths \approx 0.429$.)
\item As discussed earlier, the $O(n^2)$-time algorithm
for computing optimal {\threeWCST}'s uses a dynamic-programming speed-up technique based on 
the so-called quadrangle inequality~\cite{Knuth1971,yao_efficient_dynamic_programming_80}
(which is closely related to and sometimes referred to as the Monge property).
Section~\ref{sec: non-monotonicity of dynamic-programming minimizers} focuses
on this technique. With a series of counter-examples, we show that
neither the quadrangle inequality nor a number of other (possibly useful) related
structural properties apply to {\twoWCST}'s.
\item In Section~\ref{sec: algorithms for bounded weights}
we show that faster than $O(n^4)$ algorithms for {\twoWCST}'s
can be achieved if, roughly,
the weight sequence does not grow too fast.
If the weights are from the range $[1,R]$, for some integer $R$,
then there is a simple $O(n^3)$-time algorithm for the case when $R$ is
constant.
For arbitrary $R$, we give an algorithm with running time $O(n^3\log(nR))$,
which beats the running time $O(n^4)$ on instances where
the weight sequence grows slower than geometrically.
	
\end{itemize}

%%%%%%%%%%%%%%%%%%%%%%%%%%%%%%%%%%%%%%%%%%%%%%%%%%%%%%%%%%%%%%%%%%%%%%%%%%%%%%
%%%%%%%%%%%%%%%%%%%%%%%%%%%%%%%%%%%%%%%%%%%%%%%%%%%%%%%%%%%%%%%%%%%%%%%%%%%%%%
%%%%%%%%%%%%%%%%%%%%%%%%%%%%%%%%%%%%%%%%%%%%%%%%%%%%%%%%%%%%%%%%%%%%%%%%%%%%%%

\section{Preliminaries}
\label{sec: preliminaries}

%%%%%%%%%%%%%%%%%%%%%%%%%%%%%%%%%%%%%%%%%%%%%%%%%%%%%%%%%%%%%%%%%%%%%%%%%%%%%%
%%%%%%%%%%%%%%%%%%%%%%%%%%%%%%%%%%%%%%%%%%%%%%%%%%%%%%%%%%%%%%%%%%%%%%%%%%%%%%

%\section{Preliminaries}
%\label{sec: preliminaries}
%\input{02_preliminaries.tex}

%%%%%%%%%%%%%%%%%%%%%%%%%%%%%%%%%%%%%%%%%%%%%%%%%%%%%%%%%%%%%%%%%%%%%%%%%%%%%%
%%%%%%%%%%%%%%%%%%%%%%%%%%%%%%%%%%%%%%%%%%%%%%%%%%%%%%%%%%%%%%%%%%%%%%%%%%%%%%

\paragraph{Notation.} 
Recall that $\Keys$ denotes the set of keys. We will typically use letters $i,j,k,...$ for keys.
By $w_k$ we denote the weight of key $k\in \Keys$.
The specific key values are not relevant --- in fact one can assume that $\Keys = \braced{1,2,...,n}$.
We can then represent a problem instances simply by its weight sequence $w_1 , \ldots, w_n$.
For simplicity, in most of this paper we assume that all weights
are different\footnote{The extension to arbitrary weights is actually
not trivial, but it can be done without affecting the time complexity.
See~\cite{chrobak_simple_2021}, for example, for the discussion of this issue.}.

By $W = \sum_{k\in\Keys} w_k$ we denote the \emph{total weight} of the instance, 
and we say that an instance is \emph{normalized} if $W=1$. If $T$ is an \twoWCST, 
then its weight $w\parend{T}$ is the total weight of the instance it handles (i.e. the sum of all weights of its leaves). 
If $v$ is a node in $T$, its weight $w\parend{v}$ is the weight of the sub-tree rooted at $v$.

For convenience, we denote an equal-to test to key $k$ as $\equalstest{k}$
and less-than test as $\lessthantest{k}$. 
Furthermore, the ``left'' and ``right'' branches of a comparison test 
refer to the ``yes'' and ''no'' outcomes respectively.
To simplify some proofs, we will sometimes label the outcomes with relation symbols
``$=$'', ``''$\neq$'', ``$<$'', etc.

%%%%%%%%%%%%%%%%%%%%%%%

\paragraph{Side weights.} 
The concept of side-weights was defined by Anderson et al.~\cite{Anderson2002},
who also proved the three lemmas that follow.

	Let \(v\) be a node in a \twoWCST $T$. Then the \emph{side-weight} of \(v\) is 
	\[ sw\parend{v} = \begin{cases}
		0 , & v \text{ is a leaf}, \\
		w, &v \text{ is an equal-to test on a key of weight } w , \\
		\min\braced{w\parend{T_L} , w\parend{T_R}} & v \text{ is a less-than test with sub-trees } T_L, T_R .
	\end{cases}\]

\begin{lemma} \label{lemma: side-weight monoticity}
	\emph{\cite{Anderson2002}}
	Let \(T\) be an optimal \twoWCST. Then \(sw\parend{u} \ge sw\parend{v}\) if \(u\) is a parent of \(v\).
\end{lemma}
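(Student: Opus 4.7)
The plan is to prove the contrapositive by a local exchange argument: supposing some parent $u$ of $v$ in a \twoWCST $T$ satisfies $sw(u) < sw(v)$, I construct a valid tree $T'$ with $\cost(T') < \cost(T)$, contradicting the optimality of $T$. Since $\cost(T) = \sum_{x \in T'} w(x)$ (summed over internal nodes $x$), and the rearrangement I perform only reshuffles the two-node gadget at $u$ and $v$ while preserving the multiset of leaves in $u$'s subtree, the cost difference reduces to a short arithmetic expression in a handful of subtree-weights in this region.

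I would then split into four cases according to the test types at $u$ and $v$. When $u = \equalstest{j}$ and $v = \equalstest{i}$, the vertex $v$ must lie on the ``no'' branch of $u$ (the ``yes'' branch is just the leaf for $j$), and swapping the order of the two tests changes the pair of contributing subtree-weights from $\{w(u), w(u)-w_j\}$ to $\{w(u), w(u)-w_i\}$, a net cost shift of $w_j - w_i = sw(u) - sw(v) < 0$. When $u = \equalstest{j}$ and $v = \lessthantest{k}$, again $v$ sits on the ``no'' branch of $u$; promoting $\lessthantest{k}$ to the top of the gadget and placing a fresh $\equalstest{j}$ on whichever of its two sides contains $j$ yields a cost change of $w_j - w_{\text{far}}$, where $w_{\text{far}}$ is the weight of the subtree of $v$ not containing $j$ and satisfies $w_{\text{far}} \geq sw(v) > sw(u) = w_j$. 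When $u = \lessthantest{k_u}$ and $v = \lessthantest{k_v}$, say with $v$ on the yes-side of $u$, the leaves of $u$'s subtree split into blocks $L_v$ (keys $<k_v$), $R_v$ (keys in $[k_v, k_u)$), and $R_u$ (keys $\geq k_u$); reordering the two comparisons to put $\lessthantest{k_v}$ first gives a cost change of $w(R_u) - w(L_v)$.

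The step I expect to require the most care is the remaining mixed case, $u = \lessthantest{k}$ with $v = \equalstest{j}$ beneath it, and the same idea also pins down the sign in the last sub-case above. Here $v$ must sit on the side of $u$ containing the leaf for $j$, so $w_j$ is at most the total weight of that side; combined with $w_j = sw(v) > sw(u) = \min\braced{w(T^u_{\text{yes}}), w(T^u_{\text{no}})}$, this forces the minimum to be attained on the \emph{other} side of $u$, i.e.\ $w_{\text{far}} < w_j$, where $w_{\text{far}}$ is the weight of the side of $u$ not containing $j$. The natural exchange -- promoting $\equalstest{j}$ to the top and letting $\lessthantest{k}$ act on all remaining keys under its ``no'' branch -- then has cost change exactly $w_{\text{far}} - w_j < 0$. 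In every case the rearranged tree is still a valid \twoWCST, because the same partition of keys is realized by the same pair of tests in a different nesting order, so the strict improvement contradicts the optimality of $T$.
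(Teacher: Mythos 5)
Your exchange argument is correct, and it handles all four test-type combinations properly. For the record, the paper does not prove this lemma at all — it cites it directly from Anderson et al.\ \cite{Anderson2002} — so there is no proof in the paper to compare against; the original proof in that reference is also a local exchange argument of exactly this flavor, so your route is the standard one.

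Two small remarks. First, you should dispose of the trivial case where $v$ is a leaf: there $sw(v)=0$ and the inequality holds for free, so the four cases you enumerate (which implicitly assume $v$ is a comparison node) are the only substantive ones. Second, in your third case ($u = \lessthantest{k_u}$, $v = \lessthantest{k_v}$) it is worth spelling out the sign argument rather than deferring to the mixed case: since $sw(v) = \min\{w(L_v), w(R_v)\} \le w(L_v) + w(R_v)$, the assumption $sw(u) < sw(v)$ rules out $sw(u) = w(L_v) + w(R_v)$ and forces $sw(u) = w(R_u)$, whence $w(R_u) = sw(u) < sw(v) \le w(L_v)$ and the cost change $w(R_u) - w(L_v)$ is strictly negative. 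Every other step — the identification of where the minimum of $sw(u)$ must be attained in the mixed less-than/equal-to case, the observation that the two tests induce the same three-way partition of keys in either nesting order so the reattached subtrees are unaffected, and the resulting cost deltas $w_j - w_i$, $w_j - w_{\text{far}}$, and $w(T_{\text{far}}) - w_j$ — checks out.
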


\begin{lemma} \label{lemma: RMLK property}
	\emph{\cite{Anderson2002}}
	For \(n > 2\), if an optimal \twoWCST for any instance of \(n\) keys is rooted at an equal-to test on \(i\), 
	then \(i\) is a key of maximum-weight.
\end{lemma}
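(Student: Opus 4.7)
I would argue by contradiction. Suppose $T$ is an optimal \twoWCST with root $\equalstest{i}$ but that some key $j \neq i$ satisfies $w_j > w_i$; my goal is to construct a strictly cheaper tree $T^*$ for the same instance.

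Let $u_0 = \text{root}, u_1, \ldots, u_p = \ell_j$ be the root-to-leaf path ending at leaf $j$, and let $v_r$ be the sibling subtree of $u_{r+1}$; note in particular that $v_0 = \ell_i$. Because $n > 2$, the no-subtree of the root has at least two keys and hence is not a single leaf, so $p \ge 2$. The key structural step uses Lemma~\ref{lemma: side-weight monoticity}: since $sw(u_0) = w_i$ and side-weight is non-increasing along any root-to-leaf path, $sw(u_r) \le w_i$ for every $r$. I would then deduce that $w(v_r) \le w_i$ for every $r \in \{0,\ldots,p-1\}$. If $u_r = \equalstest{k}$, the yes-child is the leaf $k$, and the path cannot turn there unless $k = j$, which would force $sw(u_r) = w_j > w_i$, contradicting monotonicity; hence the path turns right and $v_r$ is the leaf $k$ with $w_k = sw(u_r) \le w_i$. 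If $u_r = \lessthantest{k}$, the child containing $\ell_j$ has weight at least $w_j > w_i \ge sw(u_r)$, so the minimum defining $sw(u_r)$ must be attained on the opposite side, giving $w(v_r) = sw(u_r) \le w_i$.

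Next I would construct $T^*$ by pulling $j$ up to the root. Let $\widetilde T$ be the tree obtained from $T$ by deleting leaf $j$ and contracting its parent $u_{p-1}$ into the sibling subtree $v_{p-1}$; the only query previously routed through $u_{p-1}$ to $\ell_j$ was $j$ itself, so a direct check confirms that $\widetilde T$ is a valid \twoWCST for $\Keys \setminus \{j\}$. Set $T^*$ to have $\equalstest{j}$ at the root, with yes-branch the leaf $j$ and no-branch $\widetilde T$. Comparing depths in $T$ and $T^*$: the depth of $j$ drops from $p$ to $1$; every leaf in $v_{p-1}$ retains its original depth (moved up by one in $\widetilde T$ and then pushed down by one under the new root); every leaf in $v_r$ for $r < p-1$ has its depth increased by exactly one. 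Summing weight times depth,
\[
\cost(T^*) - \cost(T) \;=\; -(p-1)\,w_j \;+\; \sum_{r=0}^{p-2} w(v_r) \;\le\; -(p-1)\,w_j + (p-1)\,w_i \;=\; -(p-1)(w_j - w_i) \;<\; 0,
\]
using $w(v_r) \le w_i$ together with $p \ge 2$ and $w_j > w_i$. This contradicts optimality of $T$.

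The main obstacle I anticipate is justifying the exchange. A naive attempt that just relabels the leaves for $i$ and $j$ in $T$ does not work, because a $\lessthantest{k}$ test on the path to $\ell_j$ can route $i$ and $j$ to different children. The fix is precisely the sibling-weight bound $w(v_r) \le w_i$ extracted from Lemma~\ref{lemma: side-weight monoticity}, which is exactly what makes the ``push the $v_r$'s down by one while lifting $j$ to the root'' transformation save $(p-1)(w_j - w_i)$.
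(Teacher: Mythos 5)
Your proof is correct, and since the paper simply cites this lemma from Anderson et al.\ without reproducing a proof, there is no paper proof to compare against directly; your exchange argument via side-weight monotonicity is the standard route. A few checks I carried out: the case analysis on $u_r$ correctly establishes $w(v_r)\le w_i$ for $r=0,\dots,p-1$ (for an equal-to test $\equalstest{k}$ at $u_r$, either the path continues on the no-side and $v_r=\ell_k$ with $w_k=sw(u_r)\le w_i$, or $k=j$ and $sw(u_r)=w_j>w_i$ contradicts Lemma~\ref{lemma: side-weight monoticity}; for a less-than test the side containing $\ell_j$ weighs at least $w_j>w_i\ge sw(u_r)$ so the sibling realizes the side-weight); the contraction producing $\widetilde T$ is valid in the successful-query model because only the query $q=j$ reaches $\ell_j$; the depth accounting (leaf $j$ from $p$ to $1$, subtrees $v_0,\dots,v_{p-2}$ down by $1$, $v_{p-1}$ unchanged) is right; and $p\ge 2$ follows from $n>2$, so the final bound $\cost(T^*)-\cost(T)\le -(p-1)(w_j-w_i)<0$ holds. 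You correctly flag that a naive leaf relabeling of $i$ and $j$ would fail, which is exactly why the side-weight bound on the siblings is needed to make the ``lift $j$ to the root'' transformation profitable.

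One small caveat worth being aware of, though not a flaw: your proof takes Lemma~\ref{lemma: side-weight monoticity} as given. That is appropriate here since the paper lists it as prior knowledge from the same source, but if one were reconstructing Anderson et al.'s development from scratch one would want to check that the proof of side-weight monotonicity does not itself rely on the present lemma; it does not (it is a local rotation argument), so there is no circularity.
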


%%%%%%%%%%%%%%%%%%%%%%%

\paragraph{Dynamic programming.}
We now show a dynamic programming formula for the optimal cost of \twoWCST's.
In this formulation we assume that the key weights are different. This formulation
involves $O(n^3)$ subproblems, and the cost of each involves a minimization over
$O(n)$ smaller subproblems. As shown in~\cite{chrobak_simple_2021}, with appropriate
data structures this leads to an $O(n^4)$-time algorithm, that also applies to instances with
equal weight keys and to the general model with unsuccessful queries.

Assume that $\Keys = \braced{1,2,...,n}$.
Let $a_1,a_2,...,a_n$ be the permutation of $1,2,...,n$, such that 
$w_{a_1} < w_{a_2} < ... < w_{a_n}$. For $1\le i \le j \le n$ and $1\le h \le n$ we define a sub-instance
\begin{eqnarray*}
	I^h_{i,j} &=& \braced{i,...,j} \cap \braced{a_1,...,a_h}.
\end{eqnarray*}
Thus $I^h_{i,j}$ consists of all keys among $i,i+1,...,j$ whose weight is at most $w_{a_h}$.
Let $w^h_{i,j}$ be the total weight of the keys in $I^h_{i,j}$.

By $C^h_{i,j}$ we denote the minimum cost of the \twoWCST for
$I^h_{i,j}$. Our goal is to compute $C^n_{1,n}$ --  the optimal cost of the \twoWCST for the whole instance.

We now give a recurrence for $C^h_{i,j}$. If $|I^h_{i,j}| = 1$ then
$C^h_{i,j} = 0$. (This corresponds to a leaf $k$, for $I^h_{i,j} = \braced{k}$.)
Assume that $|I^h_{i,j}|\ge  2$ (in particular, $i < j$).
If $a_h\notin I^h_{i,j}$ then $C^h_{i,j} = C^{h-1}_{i,j}$,
because in this case $I^h_{i,j} = I^{h-1}_{i,j}$. If $a_h\in I^h_{i,j}$ then
compute $C^h_{i,j}$ as follows:

\begin{align}
	C^h_{i,j} \;&=\; w^h_{i,j} + 
	\min \braced{ C^{h-1}_{i,j} , S^h_{i,j} },  \quad\textrm{where}  
	\label{eqn: formula for C}
	\\
	S^h_{i,j} \;&=\; 
	\min_{l = i,...,j-1}\braced{ C^h_{i,l} + C^h_{l+1,j}}.
	\label{eqn: formula for S} 
\end{align}
The two choices in the minimum formula for $C_{i,j}^h$
correspond to creating an internal node, either
the equal-to test
$\equalstest{{a_h}}$ or a less-than-test $\lessthantest{{l+1}}$.
We refer to the index (key) $l$ in the formula of $S_{i, j}^h$ to be a ``cut'' or ``cut-point'', 
and define $C_{i, j}^h \parend{l} = C^h_{i,l} + C^h_{l+1,j}$.

%%%%%%%%%%%%%%%%%%%%%%%%%%%%%%%%%%%%%%%%%%%%%%%%%%%%%%%%%%%%%%%%%%%%%%%%%%%%%%
%%%%%%%%%%%%%%%%%%%%%%%%%%%%%%%%%%%%%%%%%%%%%%%%%%%%%%%%%%%%%%%%%%%%%%%%%%%%%%
%%%%%%%%%%%%%%%%%%%%%%%%%%%%%%%%%%%%%%%%%%%%%%%%%%%%%%%%%%%%%%%%%%%%%%%%%%%%%%

\section{Threshold Results}
\label{sec: threshold results}

%%%%%%%%%%%%%%%%%%%%%%%%%%%%%%%%%%%%%%%%%%%%%%%%%%%%%%%%%%%%%%%%%%%%%%%%%%%%%%
%%%%%%%%%%%%%%%%%%%%%%%%%%%%%%%%%%%%%%%%%%%%%%%%%%%%%%%%%%%%%%%%%%%%%%%%%%%%%%

%\section{Threshold Results}
%\label{sec: threshold results}
%\input{03_threshold_results.tex}

%%%%%%%%%%%%%%%%%%%%%%%%%%%%%%%%%%%%%%%%%%%%%%%%%%%%%%%%%%%%%%%%%%%%%%%%%%%%%%
%%%%%%%%%%%%%%%%%%%%%%%%%%%%%%%%%%%%%%%%%%%%%%%%%%%%%%%%%%%%%%%%%%%%%%%%%%%%%%

The following results were given in~\cite{Anderson2002,atalig_chrobak_tight_threshold_23}:

\begin{theorem}\label{theorem: lambda-}
\emph{\cite{Anderson2002}}
	Let \(\lambda^-\) be the largest value such that for all \(p \in \parend{0, \lambda^-}\) and all normalized instances with max key-weight \(p\), 
	there exists no optimal \twoWCST rooted at an equal-to test. Then $\lambda^- = \onefourth$.
\end{theorem}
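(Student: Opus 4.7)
The plan is to establish the two inequalities $\lambda^-\ge\onefourth$ and $\lambda^-\le\onefourth$ separately.

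For $\lambda^-\ge\onefourth$, fix a normalized instance with maximum key-weight $p<\onefourth$ and suppose, toward a contradiction, that some optimal \twoWCST $T$ is rooted at an equal-to test. By Lemma~\ref{lemma: RMLK property}, the test at the root must be $\equalstest{k}$ for a maximum-weight key $k$, so $w_k=p$, and $\cost(T)=W+\cost(T_{\neq})=1+\cost(T_{\neq})$, where $T_{\neq}$ is the no-branch sub-tree handling $\Keys\setminus\{k\}$. The strategy is to build an alternative tree $T'$ with $\cost(T')<\cost(T)$ by reorganizing the root of $T$ so that the first comparison is a less-than test. I would do a case analysis on the root of $T_{\neq}$: if it is $\lessthantest{m}$ with sub-trees $T_L,T_R$, I would lift this less-than test to the root of $T'$ and re-insert $k$ at the top of whichever of $T_L,T_R$ contains $k$'s key-range by making that sub-tree's new root $\equalstest{k}$ with a leaf for $k$ on the yes-branch; if the root of $T_{\neq}$ is itself an equal-to test $\equalstest{j}$, then $w_j\le p$ and I would instead insert an artificial less-than test at the root that isolates $\{k\}$ (or $\{k,j\}$) on one side. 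A direct computation shows that the resulting cost gap equals $2p+w(S)-1$, where $S$ is the sub-tree on the side where $k$ was inserted; choosing $S$ to be the lighter side and applying Lemma~\ref{lemma: side-weight monoticity} to bound $w(S)$ turns $p<\onefourth$ into a strict inequality $\cost(T')<\cost(T)$.

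For $\lambda^-\le\onefourth$, I would exhibit a family of normalized instances with maximum key-weight $p=\onefourth+\varepsilon$ (for every sufficiently small $\varepsilon>0$) for which some optimal tree is rooted at an equal-to test. A natural candidate is an instance with one heavy key of weight $\onefourth+\varepsilon$ and a bulk of light keys whose optimum sub-tree shape on $\Keys\setminus\{k\}$ is a balanced arrangement whose cost tightly matches the weight balance $1-p$ vs.\ $p$. I would then verify, by computing and comparing the cost of $T$ rooted at $\equalstest{k}$ against the cost of every candidate tree rooted at $\lessthantest{l}$ for $l=2,\ldots,n$, that the equal-to test wins (or at least ties, which is enough since ``there exists no optimal'' fails at equality). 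The calculation reduces to a handful of closed-form comparisons in the total weights on each side of each potential cut.

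The main obstacle is the upper-bound direction, specifically the case analysis on the root of $T_{\neq}$ and the bookkeeping needed to certify that the cost change $2p+w(S)-1$ is strictly negative. The delicate point is that choosing $k$ to sit on the lighter side of the lifted less-than test only gives $w(S)\le(1-p)/2$, which would force $p<\onethird$ rather than $p<\onefourth$; to reach the tight bound $\onefourth$, one must exploit Lemma~\ref{lemma: side-weight monoticity} applied to the root of $T_{\neq}$, which forces the side-weight at that node to be at most $p$, and combine this with a secondary reorganization when $k$ is forced onto the heavier side. Managing these sub-cases cleanly, and ensuring the construction handles the degenerate situations (e.g.\ when $T_{\neq}$ has few leaves, or when the sub-trees involved are themselves singletons), will be the most technically involved part of the proof.
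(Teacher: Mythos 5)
Your proposal does not close the upper-bound direction $\lambda^-\ge\onefourth$, and you essentially admit as much. The single-level rearrangement you describe — lifting the less-than test $\lessthantest{m}$ at the root of $T_{\neq}$ above $\equalstest{k}$ — gives a cost change of exactly $2p+w(S)-1$, and the best bound you can get on $w(S)$ from one application of Lemma~\ref{lemma: side-weight monoticity} is $w(S)\le p$ (the side-weight of $v_1$ is at most that of $v_0$). That yields $3p-1<0$, i.e.\ $p<\onethird$, which is the same bound you reach by the cruder estimate $w(S)\le(1-p)/2$. Worse, you cannot actually \emph{choose} to put $k$ on the lighter side of $\lessthantest{m}$: the key $k$ is a fixed value and must be re-inserted into whichever of $T_L,T_R$ covers its key range, so "choosing $S$ to be the lighter side" is not a move available to you. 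The tight bound $\onefourth$ genuinely requires descending further than one level.

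The actual proof in~\cite{Anderson2002} (reproduced and generalized in this paper as Theorem~\ref{theorem: Side Weights Thresholds}, whose Corollary~\ref{corollary: First Side Weight Threshold} gives $sw(v_0)\ge\onefourth w(T)$ and hence $\lambda^-\ge\onefourth$ immediately) proceeds quite differently. It considers the full \emph{main-branch path} $(v_0,v_1,\ldots,v_\ell)$, decomposes $w(T)$ as the sum of the side-weights along this path, uses monotonicity to get $w(T)\le sw(v_0)+\ell\cdot sw(v_1)$, disposes of the case $\ell\le 3$ arithmetically, and for $\ell\ge 4$ re-labels the keys tested at $v_0,\ldots,v_3$ in sorted order as $b_0\le b_1\le b_2\le b_3$, then re-roots the tree at one of the \emph{inner} cut-points $b_1$ or $b_2$. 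This splits the tree into a wider, shallower shape, and the crux is to show that the fracture this causes is confined to the deepest sub-trees $T_3,T_4$ (protected by Lemma~\ref{lemma: Split Tree property}), so the net cost change is dominated by the deficit $w(T_0)+w(T_1)<\onehalf w(T)$. This four-level restructuring, with its case analysis on whether $b_1,b_2$ come from equal-to or less-than tests, is what achieves the tight constant; your two-level rearrangement cannot, and your suggestion of a "secondary reorganization when $k$ is forced onto the heavier side" is not concretized and does not obviously lead there. The lower-bound direction $\lambda^-\le\onefourth$ is left as a plan (exhibit a tight family of instances) rather than a proof, but that part is plausible and is in any case not reproduced in this paper.
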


\begin{theorem}\label{theorem: lambda+ lower bound}
\emph{\cite{Anderson2002,atalig_chrobak_tight_threshold_23}}
	Let \(\lambda^+\) be the smallest value such that for all \(p \in \parend{\lambda^+, 1}\) and all normalized instances with max key-weight \(p\), 
	there exists an optimal \twoWCST rooted at an equal-to test. Then $\highlambda = \threesevenths$.
\end{theorem}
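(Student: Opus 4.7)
The plan is to prove $\highlambda = \threesevenths$ by establishing both inequalities. The upper bound $\highlambda \le \threesevenths$ asserts that for every normalized instance whose maximum key-weight $p$ satisfies $p > \threesevenths$, some optimal \twoWCST is rooted at an equal-to test (necessarily on a key of maximum weight, by Lemma~\ref{lemma: RMLK property}). The lower bound $\highlambda \ge \threesevenths$ requires exhibiting, for every $\epsilon > 0$, a normalized instance with maximum weight in $\parend{\threesevenths - \epsilon, \threesevenths}$ in which no optimal \twoWCST is rooted at an equal-to test.

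For the upper bound I would argue by contradiction. Let $T$ be optimal and rooted at a less-than test $\lessthantest{l+1}$, with left sub-tree $T_L$ on the key set $K_L = \braced{1,\ldots,l}$ and right sub-tree $T_R$ on $K_R = \braced{l+1,\ldots,n}$; without loss of generality the unique max-weight key $k^*$ lies in $K_L$. By Lemma~\ref{lemma: side-weight monoticity}, any occurrence of $\equalstest{k^*}$ in $T$ forces all of its ancestors, including the root, to have side-weight at least $p$; in particular $\min\braced{w\parend{T_L}, w\parend{T_R}}\ge p$, so both halves are heavy. I would then construct $T'$ with root $\equalstest{k^*}$ whose right sub-tree handles $\Keys\setminus\braced{k^*}$ by extracting the leaf $k^*$ from $T_L$, forming a residual tree $\tildeL$ on $K_L\setminus\braced{k^*}$, and joining $\tildeL$ with $T_R$ at a single less-than test. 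Bounding $\cost(T')-\cost(T)$ reduces to an inequality among $p$, $w\parend{T_L}$, $w\parend{T_R}$, and the weights of the sub-trees along the root-to-$k^*$ path in $T$; a case analysis --- depending on whether $\equalstest{k^*}$ already appears in $T$ and on the position of $k^*$ within $K_L$ --- shows that every case yields $\cost(T') \le \cost(T)$ whenever $p > \threesevenths$, contradicting the assumption that no optimum starts with $\equalstest{k^*}$.

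For the lower bound I would produce a small explicit family of instances. A natural template is a constant-size normalized instance whose weights are parameterized by a value close to $\threesevenths$ in such a way that (i)~the parameter $p$ equals the largest weight; (ii)~the best less-than-rooted tree and the best equal-to-rooted tree have equal cost at $p = \threesevenths$; and (iii)~the less-than-rooted cost is strictly smaller for $p$ slightly below $\threesevenths$. Setting up the two cost expressions and equating them yields a system whose solution fixes the remaining weights as functions of $p$; a short perturbation argument then extends the single boundary counterexample to a whole interval $\parend{\threesevenths - \epsilon, \threesevenths}$, showing that no $\lambda < \threesevenths$ satisfies the defining property of $\highlambda$.

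The main technical obstacle lies in the exchange argument for the upper bound. Because the keys are linearly ordered, extracting $k^*$ from $T_L$ forces a rerouting of the less-than comparisons along the root-to-$k^*$ path; the residual tree $\tildeL$ must remain a valid BST on $K_L\setminus\braced{k^*}$, and the subsequent join with $T_R$ must respect the key ordering at every internal node. The tight constant $\threesevenths$ emerges from optimizing, over all admissible distributions of weight between $T_L$ and $T_R$, the trade-off between the savings obtained by lifting $k^*$ up to the root and the overhead of inserting the new equal-to test above the reassembled sub-trees; ensuring that every boundary case --- in particular when $T_L$ is itself rooted at an equal-to test, or when $k^*$ sits at a corner of $K_L$ --- is simultaneously controlled is where the bulk of the technical work lies.
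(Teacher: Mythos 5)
This theorem is stated in the paper purely as a citation: the text immediately preceding it says ``The following results were given in~\cite{Anderson2002,atalig_chrobak_tight_threshold_23},'' and the introduction explains that Anderson et al.\ established $\highlambda\in[\threesevenths,\fournineths]$ while the companion paper~\cite{atalig_chrobak_tight_threshold_23} tightened this to $\highlambda=\threesevenths$. The present paper contains no proof of this theorem, so there is no ``paper's own proof'' to compare your proposal against.

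That said, a few remarks on your outline. Your two-sided plan (show every instance with $p>\threesevenths$ admits an equal-to root, and exhibit instances just below $\threesevenths$ that do not) is the right structure, and the easy direction ($\highlambda\ge\threesevenths$, i.e.\ the counterexample family) was already in Anderson et al. For the hard direction, there is a logical gap where you write that Lemma~\ref{lemma: side-weight monoticity} ``forces all of [$\equalstest{k^*}$'s] ancestors, including the root, to have side-weight at least $p$; in particular $\min\braced{w\parend{T_L},w\parend{T_R}}\ge p$.'' That chain only fires if $\equalstest{k^*}$ actually appears as a comparison node in $T$; in an optimal tree the heaviest leaf $k^*$ may instead be isolated as the lone key on one side of a less-than test, in which case no node has side weight tied to $p$ and your deduction does not apply. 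The exchange argument must therefore handle the case where $k^*$ reaches its leaf without ever being tested for equality, and that case is not addressed. More broadly, the crux of the proof --- the inequality manipulation that makes the constant $\threesevenths$ emerge and the full case analysis of where $k^*$ sits and how $\tilde L$ is reassembled --- is entirely deferred, so as written this is a plausible plan but not a proof. If you want to complete it, the companion paper~\cite{atalig_chrobak_tight_threshold_23} is the right reference; its argument is more delicate than a single rotation because it must track several subtrees on the root-to-$k^*$ path.
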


\begin{lemma} \label{lemma: Split Tree property}
\emph{\cite{Anderson2002}}
	Let $\Keys$ be an instance of keys and $\Keys_1$ and $\Keys_2$ be two complementary sub-sequences of $\Keys$. 
	Let $T$, $T_1$, and $T_2$ 
	be optimal \twoWCST{s} for $\Keys$, $\Keys_1$, $\Keys_2$ respectively. 
	Then $\cost\parend{T_1} + \cost\parend{T_2} \le \cost\parend{T}$.
\end{lemma}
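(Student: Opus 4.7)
The plan is to take an optimal tree $T$ for $\Keys$ and, from it, construct two \twoWCST{s} $T'_1$ and $T'_2$ for $\Keys_1$ and $\Keys_2$ respectively, whose combined cost is at most $\cost(T)$. Since $T_1$ and $T_2$ are optimal, we will then have $\cost(T_1) + \cost(T_2) \le \cost(T'_1) + \cost(T'_2) \le \cost(T)$. The two restricted trees $T'_1, T'_2$ are obtained by ``pruning'' $T$: for $T'_1$, remove all leaves corresponding to $\Keys_2$-keys, and repeatedly contract any internal node whose test has a deterministic outcome on $\Keys_1$. Concretely, an equal-to test $\equalstest{k}$ with $k \notin \Keys_1$ always returns ``no'' on queries from $\Keys_1$, so we replace such a node with its ``no'' subtree; similarly, a less-than test $\lessthantest{k}$ for which all $\Keys_1$-leaves lie on one side is replaced by that side. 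The construction of $T'_2$ is symmetric.

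Next I would verify that $T'_1$ is a valid \twoWCST for $\Keys_1$. For any $q \in \Keys_1$, the search path in $T$ terminates at the leaf labelled $q$. The contraction steps only skip tests whose outcomes are predetermined for $\Keys_1$-queries, so the path followed by $q$ in $T'_1$ leads to the same leaf as in $T$. Conversely, every leaf of $T'_1$ is a $\Keys_1$-leaf of $T$, so there are no spurious leaves.

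The key inequality then follows from a node-by-node accounting. For each internal node $v$ of $T$ with subtree weight $w(v)$, write $w_1(v)$ and $w_2(v)$ for the total weights of $\Keys_1$- and $\Keys_2$-keys in its subtree, so that $w(v) = w_1(v) + w_2(v)$. The node $v$ contributes $w(v)$ to $\cost(T)$. After restriction, $v$ either survives in $T'_1$ as an internal node contributing $w_1(v)$, or is contracted and contributes $0$; in either case its contribution to $\cost(T'_1)$ is at most $w_1(v)$. The analogous statement holds for $T'_2$ with weight bound $w_2(v)$. Summing over all internal nodes $v$ of $T$ gives
\[
\cost(T'_1) + \cost(T'_2) \;\le\; \sum_{v} \bigl(w_1(v) + w_2(v)\bigr) \;=\; \sum_{v} w(v) \;=\; \cost(T),
\]
which, combined with optimality of $T_1$ and $T_2$, yields the lemma.

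The main obstacle is the bookkeeping around the contraction step, rather than any deep combinatorial issue: one must make sure that the transformation genuinely produces a well-formed search tree (every internal node is binary, every $\Keys_1$-key has a unique corresponding leaf, and every test remains legal when leaves are removed). Since less-than tests with arbitrary thresholds are always legal regardless of which keys appear, and equal-to tests on keys not in $\Keys_1$ are exactly the ones we contract, the construction goes through cleanly and the cost bound is tight node-by-node.
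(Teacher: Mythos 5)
Your proposal is correct and it is essentially the argument the paper gives, just spelled out in more detail: the paper's proof is a one-sentence remark that $T$ serves as a (redundant) search tree for both $\Keys_1$ and $\Keys_2$, that the two search costs add up to $\cost(T)$, and that removing redundant comparisons only decreases cost. Your pruning construction, the node-by-node accounting with $w(v) = w_1(v) + w_2(v)$, and the appeal to optimality of $T_1, T_2$ are a precise elaboration of exactly that idea.
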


The last lemma is in fact quite obvious: One can think of $T$ as a redundant search tree that handles both instances
$\Keys_1$, $\Keys_2$, and the search costs for these instances add up to the cost of $T$. Removing redundant comparisons
can only decrease cost.

%%%%%%%%%%%%%%%%%%%%%%%%%%%%%%%%%%%%%%%%%%%%%%%%%%%%%%%%%%%%%%%%%%%%%%%%%%%%%%
%%%%%%%%%%%%%%%%%%%%%%%%%%%%%%%%%%%%%%%%%%%%%%%%%%%%%%%%%%%%%%%%%%%%%%%%%%%%%%

\subsection{Equal-to Test Threshold for Two Heaviest Keys}
\label{sec: equality-test threshold for two heaviest keys}

%%%%%%%%%%%%%%%%%%%%%%%%%%%%%%%%%%%%%%%%%%%%%%%%%%%%%%%%%%%%%%%%%%%%%%%%%%%%%%
%%%%%%%%%%%%%%%%%%%%%%%%%%%%%%%%%%%%%%%%%%%%%%%%%%%%%%%%%%%%%%%%%%%%%%%%%%%%%%

%\subsection{Equality-Test Threshold for Two Heaviest Keys}
%\label{sec: equality-test threshold for two heaviest keys}
%\input{03-1_equality-test_threshold_for_two_heaviest_keys.tex}

%%%%%%%%%%%%%%%%%%%%%%%%%%%%%%%%%%%%%%%%%%%%%%%%%%%%%%%%%%%%%%%%%%%%%%%%%%%%%%
%%%%%%%%%%%%%%%%%%%%%%%%%%%%%%%%%%%%%%%%%%%%%%%%%%%%%%%%%%%%%%%%%%%%%%%%%%%%%%

In this section we provide threshold results that depend on the two heaviest weights, that we denote by $\alpha$ and $\beta$, with $\alpha\ge\beta$. 
By the threshold bound in~\cite{atalig_chrobak_tight_threshold_23},
if $\alpha\ge\threesevenths W$ then there is an optimal \twoWCST with an equal-to
test in the root. The theorem below shows that this threshold value depends on $\beta$.
In fact, $\beta \ge \onethird W$  already suffices for an optimal tree with an equal-to test at the root to exist.

\begin{theorem}\label{theorem: Second Heaviest Key Equality Threshold}
	Consider an instance of  $n \ge 2 $ keys with total weight  $W $. Let  $\alpha $ and  $\beta $ be the first and 
	second heaviest weights respectively, where
	\begin{equation}\label{equation: 2nd Key Threshold}
		2\alpha + \beta \ge W.
	\end{equation}
	Then there exists an optimal \twoWCST\ rooted at an equal-to test. 
\end{theorem}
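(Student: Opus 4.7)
The plan is to argue by contradiction: suppose no optimal \twoWCST\ for the instance is rooted at an equal-to test, and let $T$ be an optimal tree, necessarily rooted at a less-than test $\lessthantest{m+1}$ with subtrees $T_L,T_R$ for $\Keys_L,\Keys_R$ of weights $W_L,W_R$. By mirror symmetry we may assume that the heaviest key $k^*$ lies in $\Keys_L$, and we write $d=\depth_T(k^*)$ and $v$ for the parent of $k^*$ in $T$.

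The core of the argument is a swap construction: let $T^\star$ be the tree with $\equalstest{k^*}$ at its root, yes-branch a leaf $k^*$, and no-branch obtained from $T$ by removing the leaf $k^*$ and contracting its parent $v$ (replacing $v$ by its other child). A direct cost calculation yields
\[
\cost(T^\star) - \cost(T) \;=\; W - w(v) - \alpha(d-1).
\]
Whenever this quantity is non-positive, $T^\star$ is an equal-to-rooted tree as cheap as $T$, giving the desired contradiction. The easy cases are: $d=1$, where $v$ is the root of $T$ and $w(v)=W$, making the difference $0$; and $d\ge 3$, where $w(v)\ge\alpha$ and $\alpha(d-1)\ge 2\alpha$ bound the difference by $W-3\alpha\le 0$ using the hypothesis together with $\alpha\ge\beta$ (which gives $3\alpha\ge 2\alpha+\beta\ge W$). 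For $d=2$, $v$ is the root of $T_L$ with $w(v)=W_L$, so the difference equals $W_R-\alpha$, which is non-positive whenever $W_R\le\alpha$.

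This leaves the sub-case $d=2$ with $W_R>\alpha$. We first show that the second-heaviest key $k^{**}$ lies in $\Keys_R$: if instead $k^{**}\in\Keys_L$, then $W_L\ge\alpha+\beta$ would give $W_R\le W-\alpha-\beta\le\alpha$ by the hypothesis, contradicting $W_R>\alpha$. For this sub-case we iterate the swap: we compare $T$ with the tree $T^{\star\star}$ that has $\equalstest{k^*}$ at the root, $\equalstest{k^{**}}$ as its no-branch, and as its no-no-branch a tree for $\Keys\setminus\{k^*,k^{**}\}$ obtained from $T$ by removing both leaves $k^*$ (from $T_L$) and $k^{**}$ (from $T_R$) and contracting their parents. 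A direct computation, after absorbing the slack afforded by $W\le 2\alpha+\beta$, reduces $\cost(T^{\star\star})\le\cost(T)$ to an inequality of the form $W_R\le w(v_R)+\beta(d_R-1)$, where $d_R=\depth_{T_R}(k^{**})$ and $v_R$ is the parent of $k^{**}$ in $T_R$; this is then verified by a case analysis on $d_R$, using Lemma~\ref{lemma: side-weight monoticity} along the path from the root of $T_R$ to $v_R$, combined with Lemma~\ref{lemma: RMLK property} to restrict the shape of $T_R$.

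The main obstacle will be this last sub-case: the monotonicity of side weights along the ancestry of $k^{**}$ in $T_R$ produces an inequality of the same form as the one we need but pointing the wrong way, so closing the argument requires a quantitative use of the slack $2\alpha+\beta-W\ge 0$ afforded by the hypothesis. The tightness of this balance is precisely what makes $2\alpha+\beta\ge W$ the right threshold for the statement.
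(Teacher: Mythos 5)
The cost formula $\cost(T^\star)-\cost(T)=W-w(v)-\alpha(d-1)$ is correct, and your dispatch of the cases $d=1$, $d\ge 3$, and $d=2$ with $W_R\le\alpha$ is clean and valid. However, the sub-case $d=2$, $W_R>\alpha$ is not merely ``the main obstacle'' --- it is a genuine gap, and the slack $2\alpha+\beta-W$ does not close it. Your reduction to $W_R\le w(v_R)+\beta(d_R-1)$ is arithmetically correct (one checks $\cost(T^{\star\star})-\cost(T)=W+W_R-2\alpha-w(v_R)-\beta d_R\le W_R-w(v_R)-\beta(d_R-1)$ using $W\le 2\alpha+\beta$), but as you yourself note, side-weight monotonicity along the path from the root of $T_R$ down to $v_R$ gives each sibling subtree weight $\ge sw(v_R)$, and when $v_R=\equalstest{k^{**}}$ this forces every sibling to weigh at least $\beta$, so $W_R-w(v_R)\ge\beta(d_R-1)$ --- the \emph{reverse} of what you need. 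The slack $2\alpha+\beta-W$ is a fixed, possibly zero quantity, while the surplus $W_R-w(v_R)-\beta(d_R-1)$ is governed by the geometry of $T_R$ and can be strictly positive; there is no mechanism in your argument tying the two together. In short, $T^{\star\star}$ (delete $k^*$ and $k^{**}$, contract both parents, stack two equal-to tests on top) is simply too crude a surgery on $T_R$: it discards the internal structure of $T_R$ that must be exploited.

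The paper avoids this by a different route. It inducts on $n$, first using the $\highlambda=\threesevenths$ threshold from the companion paper (since $\alpha\ge\tfrac12(W-\beta)\ge\tfrac12 w(T_L)>\highlambda\,w(T_L)$ when $k_\beta\notin T_L$) to rewrite $T_L$ as a tree rooted at $\equalstest{k_\alpha}$, and then case-splits on the root of $T_R$. The key move in the hard case is \emph{not} to extract $k^{**}$ at all: when $T_R$ is rooted at a less-than test $\lessthantest{i}$ whose lighter branch $T_4$ has weight $<\beta$, the paper descends one more level to $\lessthantest{j}$ with branches $T_5,T_6$ and performs the rotation $\equalstest{k_\alpha}\bigl(k_\alpha,\ \lessthantest{j}\bigl(\lessthantest{r}(T_1,T_5),\ \lessthantest{i}(T_6,T_4)\bigr)\bigr)$, which leaves $k^{**}$ inside $T_5$ or $T_6$ and only pushes $T_1$ and the light subtree $T_4$ down by one while lifting $k_\alpha$ by one; the bound $w(T_5)+w(T_6)\ge\beta$ then gives exactly $W-2\alpha-\beta\le 0$. (The equal-to-root case for $T_R$ and the contradiction via side-weight monotonicity handle the remaining possibilities.) To fix your proof you would need to replace the $T^{\star\star}$ construction in this sub-case by a rotation that respects the less-than tests already present in $T_R$, which essentially reproduces the paper's Case~2.2; your direct-swap framework, as written, cannot absorb the deficit.
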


Recall that, by Lemma~\ref{lemma: RMLK property}, if an optimal tree's root is an equal-to test then
we can assume that this test is to the heaviest key.

\begin{proof}
	We prove the theorem by induction. The base cases  $n = 2, 3 $ are trivial, so let  $n \ge 4 $ and assume that the theorem holds for fewer than $n$ keys.
	Let  $T $ be a tree for an  $n $-key instance rooted at  $\lessthantest{r} $.
	Suppose that $2\alpha + \beta \ge w\parend{T} $. 
	We show that we can then convert $T$, without increasing cost, into a tree $T'$ with an equal-to test in the root.
	
	We can assume that  $2 < r \le n-1 $ since otherwise we can replace the  less-than test by an  equal-to test.
	So we have at least 2 nodes on each side. 
	Let  $k_\alpha $ and  $k_\beta $ be the keys with weights  $\alpha $ and  $\beta $. 
	For simplicity, we assume in the proof that keys $k_\alpha$ and $k_\beta$ are unique.
	The argument easily extends to the case when they are not.
	For example, in some cases when the root of a subtree containing $k_\beta$ but not $k_\alpha$ is an equal-to test,
	knowing that this test must be to the heaviest key in the subtree, we let this root be  $\equalstest{k_\beta}$.
	If this root is some other key of weight $\beta$, use this key instead in the argument, and the subsequent weight estimates do not change.
	
	We now break into two cases based on the location of these keys.
%	\begin{enumerate}[leftmargin=*, label=Case \arabic*:]
%		\item  $k_\alpha $ and  $k_\beta $ are on the same side w.r.t.\  $r $.
%		\item  $k_\alpha $ and  $k_\beta $ are on opposite sides w.r.t.\  $r $.
%	\end{enumerate}
	
	\begin{case}[$k_\alpha $ and  $k_\beta $ are on the same side w.r.t.\  $r $]\label{case 1: 2nd Heavy Key}
		We can assume that  $k_\alpha, k_\beta < r $, by symmetry.
		The left tree  $T_L $ handles less than  $n $ nodes, and since  $w\parend{T_L} \le w\parend{T} $, we have  $2\alpha + \beta \ge w\parend{T} \ge w\parend{T_L} $. 
		We can apply our inductive assumption and replace  $w\parend{T_L} $ with an equivalent-cost tree rooted at  $\equalstest{k_\alpha} $. 
		Let $T_1 $ be the right sub-tree of  $\equalstest{k_\alpha} $ and  $T_2 $ be the right sub-tree of  $\lessthantest{r} $. 
		Note that  $T_1 $ must contain  $\beta $ and  $w\parend{T_1} \ge \beta $. 
		To obtain $T'$, modify the tree by rotating  $\equalstest{k_\alpha} $ to the root, which moves leaf $k_\alpha $ up by  $1 $ and  $T_2 $ down by  $1 $. 
		The change in cost is
		
		\begin{align*}
			\cost(T')-\cost(T) \;&=\; -\alpha + w\parend{T_2} \\
			&=\; -\alpha + \parend{w\parend{T} - \alpha - w\parend{T_1}}  &&\alpha +w\parend{T_1} + w\parend{T_2} = w\parend{T} \\
			&\le\; w\parend{T} -2\alpha - \beta &&\beta \le w\parend{T_1} \\
			&\le\; 0 && \text{by~(\ref{equation: 2nd Key Threshold})}
		\end{align*}
	\end{case}
	
	\begin{case}[$k_\alpha $ and  $k_\beta $ are on opposite sides w.r.t.\  $r $]\label{case 2: 2nd Heavy Key}
		We can assume that  $k_\alpha < r \le k_\beta $, by symmetry. 
		Since $T_L$ doesn't contain  $k_\beta $, we have  $w\parend{T_L} \le w\parend{T} - \beta $. 
		By~(\ref{equation: 2nd Key Threshold}), we have  $\alpha \ge \half \parend{w\parend{T} - \beta} \ge \half w\parend{T_L} $. 
		So by  $\highlambda < \half$ (see~\cite{atalig_chrobak_tight_threshold_23}), 
		we can replace  $T_L $ with an equivalent cost tree rooted at  $\equalstest{k_\alpha} $. Again let  $T_1 $ be the right sub-tree of  $\equalstest{k_\alpha} $ and  $T_2 $ be the right sub-tree of  $\lessthantest{r} $. We now break into two sub-cases based on the root of  $T_2 $.
%		\begin{enumerate}[leftmargin=*, label=Case \thecase.\arabic*:]
%			\item  $T_2 $ is rooted at an  equal-to test.
%			\item  $T_2 $ is rooted at an  less-than test.
%		\end{enumerate}

		\begin{subcase}[$T_2 $ is rooted at an  equal-to test]\label{case 2.1: 2nd Heavy Key}
			If  $T_2 $ is rooted at an  equal-to test and $k_\alpha$ is not in $T_2$, 
			then this test is to key $k_\beta$. Let  $T_3 $ be the right sub-tree of  $\equalstest{k_\beta} $. 
			Then consider the tree $T'$ rooted at  $\equalstest{k_\alpha} $, followed by  $\equalstest{k_\beta} $, then followed by  $\lessthantest{r}$.  
			In $T'$, subtrees $T_1 $ and  $T_3 $ go down by  $1 $ and  leaf $k_\alpha $ goes up by  $1 $. The change in cost is
			
			\begin{align*}
				\cost(T') - \cost(T) \;&=\; - \alpha + w\parend{T_1} + w\parend{T_3} \\
				&=\; - \alpha + \parend{w\parend{T} - \alpha - \beta}  && \alpha + \beta + w\parend{T_1} + w\parend{T_3} = w\parend{T}\\
				&=\; w\parend{T} -2\alpha - \beta \\
				&\le\; 0
			\end{align*}
		\end{subcase}
		
		\begin{subcase}[$T_2 $ is rooted at an  less-than test]\label{case 2.2: 2nd Heavy Key}
			Let  $T_2 $ be rooted at  $\lessthantest{i}$, with  $T_3 $ and  $T_4 $ being the left and right branches respectively. 
			We may assume  $w(T_4) < \beta $, as otherwise we can do an identical rotation as in Case \ref{case 2.1: 2nd Heavy Key}. 
			This implies that  $T_3 $ must contain  $k_\beta $ and  $w\parend{T_3} \ge \beta $. 
			We can assume that $T_3 $ is not a leaf, for otherwise we can replace $\lessthantest{i}$ by an equal-to test,
			reducing it to Case~\ref{case 2.1: 2nd Heavy Key}.
			We now break into two more sub-cases based on the root of  $T_3 $.
%			\begin{enumerate}[leftmargin=*,label=Case \thesubcase.\arabic*:]
%				\item  $T_3 $ is rooted at an  equal-to test.
%				\item  $T_3 $ is rooted at an  less-than test.
%			\end{enumerate}
			
			\begin{subsubcase}[$T_3 $ is rooted at an  equal-to test]\label{case2.2.1} %\footnote{SA: I believe you could also handle this case by noting that  $T_4 \ge \beta $ by a rotation, a contradiction.}
				Let  $T_3 $ be rooted at an  equal-to test. Since  $T_3 $ contains  $k_\beta $ but not  $k_\alpha $,  $T_3 $ is rooted at  $\equalstest{k_\beta} $.
				But then, by Lemma~\ref{lemma: side-weight monoticity}, we have that $w(T_4) \ge \beta$, contradicting $w(T_4) < \beta$.	
				So this case is in fact not possible.			
			\end{subsubcase}
			
			\begin{subsubcase}[$T_3 $ is rooted at an  less-than test]\label{case 2.2.2: 2nd Heavy Key}
				Let  $T_3 $ be rooted at  $\lessthantest{j} $ with sub-trees  $T_5 $ and  $T_6 $. Then consider the tree $T'$ rooted at  $\equalstest{k_\alpha} $, 
				followed by  $\lessthantest{j} $, with  $\lessthantest{r} $ and  $\lessthantest{i} $ on the left and right branches respectively.  
				Leaf $k_\alpha $ goes up by  $1 $ and  $T_1 $ and  $T_4 $ go down by 1. The change in cost is
				
				\begin{align*}
					\cost(T') - \cost(T) \;&=\; -\alpha + w\parend{T_1} + w\parend{T_4} \\
					&=\; - \alpha + \parend{w\parend{T} - \alpha - w\parend{T_5} - w\parend{T_6}}  \\ 
					&\le\; w\parend{T} - 2\alpha - \beta  && k_\beta \text{ is in } T_5 \text{ or } T_6  \\
					&\le\; 0 &&\qedhere
				\end{align*}
				
			\end{subsubcase}

		\end{subcase}
	\end{case}

\end{proof}

%%%%%%%%%%%%%%%%%%%%%%%%%%%%%%%%%%%%%%%%%%%%%%%%%%%%%%%%%%%%%%%%%%%%%%%%%%%%%%
%%%%%%%%%%%%%%%%%%%%%%%%%%%%%%%%%%%%%%%%%%%%%%%%%%%%%%%%%%%%%%%%%%%%%%%%%%%%%%

\subsection{Threshold for Side-Weights}
\label{sec: threshold for side-weights}
%%%%%%%%%%%%%%%%%%%%%%%%%%%%%%%%%%%%%%%%%%%%%%%%%%%%%%%%%%%%%%%%%%%%%%%%%%%%%%
%
%\subsection{Threshold for Side-Weights}
%\label{sec: threshold for side-weights}
%\input{03-2_threshold_for_side-weights.tex}
%
%%%%%%%%%%%%%%%%%%%%%%%%%%%%%%%%%%%%%%%%%%%%%%%%%%%%%%%%%%%%%%%%%%%%%%%%%%%%%%

With small modification, the proof for $\lowlambda \ge \onefourth$ given by \cite{Anderson2002} can be extended to establish more general properties about side-weights,
given in 
Theorem~\ref{theorem: Side Weights Thresholds} below. Corollary~\ref{corollary: Second Heaviest Key Less-Than Thresshold} that follows
and the previous thresholds are discussed in Section~\ref{sec: tightness of thresholds}, 
and algorithmic implications of Corollary~\ref{corollary: First Side Weight Threshold} 
are discussed in Section~\ref{sec: bounding dynamic-programming minimizers by weight}.

Here, we borrow the convention of \emph{side branches} from \cite{Anderson2002}, corresponding to the side weight of a comparison node $v$. 
If $v$ is an equal-to test on key $k$, then its side branch is the leaf $k$. If $v$ is a less-than test, then its side branch is the sub-tree of 
least weight (break ties in weight arbitrarily). We then refer to the other branch of $v$ as the \emph{main branch}.

\begin{theorem}\label{theorem: Side Weights Thresholds}
	Let $T$ be an optimal tree for an instance of $n\ge 3$ keys, where $v_0$ is the root of $T$ and $v_1$ is the root of the main branch of $v_0$. Then
	\begin{equation}\label{equation: side weights thresholds}
		sw\parend{v_0} + sw\parend{v_1} \ge \tfrac{1}{2} w\parend{T}.
	\end{equation}
\end{theorem}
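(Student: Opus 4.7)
The plan is to prove the contrapositive: assume $T$ is optimal with root $v_0$ and main-branch root $v_1$, but $sw\parend{v_0} + sw\parend{v_1} < \tfrac{1}{2} w\parend{T}$, and derive a contradiction. The starting observation is that if $v_2$ denotes the root of the main branch of $v_1$, then by construction $w\parend{T} = sw\parend{v_0} + sw\parend{v_1} + w\parend{v_2}$, so our assumption is equivalent to $w\parend{v_2} > \tfrac{1}{2}w\parend{T}$: more than half of the total weight lies in the subtree rooted at $v_2$, which currently sits at depth two in $T$. I will show that this configuration is strictly improvable by exhibiting a tree $T'$ with $\cost\parend{T'} < \cost\parend{T}$, contradicting optimality.

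The construction of $T'$ is an explicit top-level rotation that lifts $v_2$'s subtree one level closer to the root while pushing the side branches $S_0$ and $S_1$ of $v_0$ and $v_1$ one level deeper. The specific rotation depends on the comparison types of $v_0$ and $v_1$ (equal-to vs.\ less-than) and, when both are less-than tests, on whether $S_0$ and $S_1$ lie on the same or opposite sides of the main spine. In each case I would compute $\cost\parend{T'} - \cost\parend{T}$ as a short linear combination of $sw\parend{v_0}$, $sw\parend{v_1}$, and $w\parend{v_2}$, and verify that it is strictly negative using $w\parend{v_2} > \tfrac{1}{2} w\parend{T}$, the side-weight monotonicity $sw\parend{v_0} \ge sw\parend{v_1}$ from Lemma~\ref{lemma: side-weight monoticity}, and, whenever an equal-to test is involved, the fact that such tests are on the heaviest key of their subtree (Lemma~\ref{lemma: RMLK property}).

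I expect the main obstacle to be the sub-case where both $v_0$ and $v_1$ are less-than tests with $S_0$ and $S_1$ on opposite sides of the main spine. In that layout the key range of $v_2$'s subtree is sandwiched between those of $S_0$ and $S_1$, so no single rotation can bring $v_2$'s subtree up as a contiguous block. This mirrors Case~\ref{case 2.2: 2nd Heavy Key} in the proof of Theorem~\ref{theorem: Second Heaviest Key Equality Threshold}, and I would address it by descending one level further and branching on the comparison type at $v_2$: if $v_2$ is a less-than test, a double rotation lifts its key to the root and re-threads the two old comparisons through the two resulting halves; if $v_2$ is an equal-to test on some $k_\gamma$, a tailored rotation brings $k_\gamma$ upward, again controlled by Lemmas~\ref{lemma: side-weight monoticity} and~\ref{lemma: RMLK property}. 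A minor boundary case, where the main branch of $v_0$ is itself a leaf (possible only on very small, lopsided instances), is disposed of by replacing $v_0$ with an equal-to test on the heaviest key, producing an optimal tree in which the desired inequality is immediate.
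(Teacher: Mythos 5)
Your setup is sound: assuming $sw\parend{v_0}+sw\parend{v_1}<\half w\parend{T}$, the identity $w\parend{T}=sw\parend{v_0}+sw\parend{v_1}+w\parend{v_2}$ does give $w\parend{v_2}>\half w\parend{T}$, and the general plan of exhibiting a cheaper restructuring matches the paper's framework. But your restructuring is too shallow. The paper does not stop at $v_2$ (or even at $v_2$'s comparison type): it first disposes of the case where the main-branch path has length $\ell\le 3$ by a pure counting argument using Lemma~\ref{lemma: side-weight monoticity} (which you don't mention; your ``minor boundary case'' of a leaf main branch is only a fragment of that), and then, for $\ell\ge 4$, it examines all of $v_0,v_1,v_2,v_3$, sorts the four tested keys as $b_0\le b_1\le b_2\le b_3$, and re-roots at an \emph{inner} key $b_1$ or $b_2$ --- introducing a brand-new less-than test when both inner keys belong to equal-to nodes, and invoking Lemma~\ref{lemma: Split Tree property} to control the resulting fracture of $T_3$ or $T_4$. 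Your proposal uses neither the four-level window nor the split-tree lemma.

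The gap is concrete in exactly the sub-case you flag as hardest. Take $v_0,v_1$ to be less-than tests with side branches on opposite sides of the spine and $v_2=\equalstest{k_\gamma}$. You claim ``a tailored rotation brings $k_\gamma$ upward,'' but by Lemma~\ref{lemma: side-weight monoticity} we have $w\parend{k_\gamma}=sw\parend{v_2}\le sw\parend{v_1}\le sw\parend{v_0}$, so lifting $\equalstest{k_\gamma}$ to the root (pushing $S_0$, $S_1$ down) changes cost by $-2\,w\parend{k_\gamma}+sw\parend{v_0}+sw\parend{v_1}\ge 0$. Nor does fracturing $T_2$ around $k_\gamma$ with a fresh less-than test save you: with $w\parend{T}=100$, $sw\parend{v_0}=26$, $sw\parend{v_1}=20$, $w\parend{k_\gamma}=3$, the no-branch of $v_2$ has weight $51$, and if it splits roughly $25.5/25.5$ about $k_\gamma$, every rotation confined to the $v_0,v_1,v_2$ window raises cost. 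You are forced to look at $v_3$, which is exactly what the paper does. So the skeleton of your argument is right but the central construction is missing the extra level of depth, the sorted re-labeling, and the use of Lemma~\ref{lemma: Split Tree property}.
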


\begin{proof}
	 The proof is identical to the one for \(\lowlambda \ge \frac{1}{4}\) given by \cite{Anderson2002}, with minor modifications.
	 Let \(T\) be an optimal tree for some instance of \(n \ge 3\) keys, and let \(\parend{v_0 , v_1, \ldots, v_\ell}\) be the unique path along main branches from the root \(v_0\) to a leaf \(v_\ell\). Assume for contradiction that the theorem is false, that is
\begin{equation}
	 sw\parend{v_0} + sw\parend{v_1} < \half w\parend{T}
	 \label{eqn: side theorem false}
\end{equation}	 
	 Any given key is either \(v_\ell\) or occupies a unique side branch corresponding to some node \(v_i\). Therefore we can express the weight of \(T\) in terms of the side weights along of this path
	
	\[w\parend{T} = \parend{sw\parend{v_0} + sw\parend{v_1} + \cdots + sw\parend{v_{\ell - 1}}} + w\parend{v_\ell}.\]
	
	Note that while \(v_\ell\) is not a side branch, the node \(v_{\ell-1}\) can be replaced with \(\angled{= v_\ell}\), allowing us to treat it as a side branch. 
	By Lemma~\ref{lemma: side-weight monoticity}, we have \(w\parend{v_\ell}, sw\parend{v_{\ell - 1}} \le sw\parend{v_{\ell - 2}} \le \cdots \le sw\parend{v_{2}} \le sw\parend{v_1}\). We thus get the inequality
	
	\[w\parend{T} \le sw\parend{v_0} + \ell \cdot sw\parend{v_1}.\]
	
	If \(\ell \le 3\), we have \(w\parend{T} \le sw\parend{v_0} + sw\parend{v_1} + 2sw\parend{v_1} < w\parend{T}\), from~(\ref{eqn: side theorem false}), a contradiction. 
	
	So assume \(\ell \ge 4\). We'll now consider the first four nodes in the main branch, letting $T_i$ be the side branch of $v_i$ for $0 \le i \le 3$, $T_4$ be the main branch of $v_3$, and $k_i$ be the node tested by $v_i$. By Lemma~\ref{lemma: side-weight monoticity} we have $w\parend{T_0} \ge w\parend{T_1} \ge w\parend{T_2} \ge w\parend{T_3}$, and $w\parend{T_0} + w\parend{T_1} < \half w\parend{T}$ by our assumption for contradiction. Also note that if $v_i$ is a less-than test, then $k_j$ for $j > i$ either all precede or all succeed  $k_i$, depending on whether $v_i$'s main branch is the yes or no outcome.
	
	We'll now modify $T$ to get a new tree $T'$ of strictly lower cost. First re-label the keys $\braced{k_0 , k_1, k_2, k_3} = \braced{b_0 , b_1, b_2, b_3}$ so that $b_i \le b_j$ for $i < j$. Like in the proofs for bounding $\lambda^-$ and $\lambda^+$, we use one of the``inner cut-points'' $b_1$ or $b_2$ to split $T$ into a wider tree. We break into two cases:
	
%	\begin{enumerate}[leftmargin=*,label=Case \arabic*:]
%		\item  .
%		\item  .
%	\end{enumerate}
	
	\begin{case}[$b_1$ and $b_2$ both correspond to equal-to tests]
		In this case, let $T'$ be rooted at $\lessthantest{b_2}$, where the tests from $T$ on $b_0$ and $b_1$ are done on the left branch and likewise for $b_2$ and $b_3$ on the right branch. Do these tests in the same order as in the original tree. Because we've introduced a new test into the tree, one $T_i$ must fracture, and we can show that that sub-tree must be $T_3$ or $T_4$. In this case the sub-trees corresponding to $b_1$ and $b_2$ are leaves and can't split, so the sub-tree must correspond to an less-than test on $b_0$ or $b_3$, the least or greatest key. If the fractured tree $T_i$ corresponds to $b_0$ and $b_0 = v_i$ for $i < 3$, then this implies that the main branch of $v_i$ corresponds the yes or $\angled{\ge}$ outcome and $T_i$ is the opposite outcome. Then $T_i$ does not contain any other keys $v_j$ for $j \ne i$ and can't be split by $\lessthantest{b_2}$. The case is symmetric if $T_i$ corresponds to the test on $b_3$.
		
		So only $T_3$ or $T_4$ split, which does not increase cost by Lemma~\ref{lemma: Split Tree property} (or if both are leaves, then we can remove test $v_3$ or $v_4$ from $T'$, only decreasing cost). By doing tests in order of the original tree, $T_3$ and $T_4$ (or their fractured components) must be at the bottom of $T'$, moving up by 1 to depth 3, and $v_0$ will be right below $\lessthantest{b_2}$, pushing down $T_0$ by 1 to depth 2. Depending on which what branches they reside, $T_1$ and $T_2$ don't change in depth, or $T_1$ goes down by 1 and $T_2$ goes up by 1 (the latter being the worse case since $w\parend{T_1} \ge w\parend{T_2}$). Summarizing, our change in cost is
		
		\begin{align*}
			\cost(T')-\cost(T) \;&\le\; w\parend{T_0} + w\parend{T_1} - w\parend{T_2} - w\parend{T_3} - w\parend{T_4}\\
			&=\; w\parend{T_0} + w\parend{T_1} - \parend{w\parend{T} - w\parend{T_0} - w\parend{T_1}} 
							&& w\parend{T_0} + ... + w\parend{T_4} = w\parend{T} \\
			&=\; 2\parend{w\parend{T_0} + w\parend{T_1}} - w\parend{T} \\
			&<\; 0			 && w\parend{T_0} + w\parend{T_1} < \half w\parend{T}
		\end{align*}
	\end{case}
	
	\begin{case}[Either $b_1$ or $b_2$ correspond to a less-than test]
		If $b_1$ or $b_2$ correspond to a less-than-test, let $T'$ be rooted at (either) said test, then place the remaining tests from the original trees into the appropriate branch, performing them in the same order as in $T$. We may assume wlog that we choose $b_1$. Since we are not introducing a new tests, no fracture can occur in this case. We break into cases depending on which test $v_i$ (which we assume is a less-than test) is rooted at $T'$.
		
		Because all other $v_i$ lie on the same branch below $v_0$, $k_0$ is either the least or greatest key. So $k_0$ can not be $b_1$, (unless $b_1$ corresponds to an equal-to test on $k_0$, but we assume $b_1$ corresponds to a less-than-test). So $v_0$ can't be the root of $T'$
		
		If $v_1$ is the root of $T'$, then $v_0$ is right below $v_1$, moving $T_0$ down by 1 at depth 2. $v_0$ will be in whatever outcome corresponds to the side branch of $v_1$ in the original tree, which implies that the main branch of $v_0$ in the new tree is $T_1$, staying at the same depth 2. That means that $v_2$ and $v_3$ are on the same branch of $v_1$, implying $T_2$, $T_3$, and $T_4$ all go up by 1. This gives a change in cost of $\Delta = w\parend{T_0} + w\parend{T_1} - w\parend{T_2} - w\parend{T_3} - w\parend{T_4} < 0$.
		
		If $v_2$ is the root of $T'$, then $T_2$ and $v_3$ will be on opposite branches. If $v_1$ and $v_3$ are on the same branch of the root, then $v_0$ is alone on one side, with $T_2$ being it's main branch. $T_0$ goes down by 1, $T_1$ doesn't move, and $T_2$, $T_3$, and $T_4$ go go up by 1, giving $\Delta = w\parend{T_0} - w\parend{T_2} - w\parend{T_3} - w\parend{T_4} < 0$. If $v_1$ and $v_3$ are on opposite sides, then $v_3$ is alone and $T_2$ is below $v_1$.  $T_0$ and $T_1$ goes down by 1, $T_2$ doesn't move, and $T_3$ and $T_4$ both go up by 2. The change in cost is
		
		\begin{align*}
			\cost(T') - \cost(T) \;&=\; w\parend{T_0} + w\parend{T_1} - 2\parend{w\parend{T_3} + w\parend{T_4}}	 \\
			&=\; w\parend{T_0} + w\parend{T_1} - 2\parend{w\parend{T} - w\parend{T_0} - w\parend{T_1} - w\parend{T_2}}  \\
			&=\; 3w\parend{T_0} + 3w\parend{T_1} + 2w\parend{T_2} - 2w\parend{T} \\
			&\le\; 4w\parend{T_0} + 4w\parend{T_1} - 2w\parend{T} && w\parend{T_2} \le w\parend{T_1} \le w\parend{T_0} \\
			&<\; 0  &&w\parend{T_0} + w\parend{T_1} < \half w\parend{T}
		\end{align*}
		
		If $v_3$ is the root of $T_3$, then $v_0$ is below $v_3$, moving $T_0$ down by 1. Either $T_1$ and $T_2$ stay at the same depth, or $T_1$ goes down by 1 and $T_2$ goes up by 1. $T_3$ and $T_4$ goes up by at least 1. This gives a change in cost $\Delta \le w\parend{T_0} + w\parend{T_1} - w\parend{T_2} - w\parend{T_3} - w\parend{T_4} < 0$. \qedhere
	\end{case}
	
%	Instead of restating the details, we'll note the differences that justify construction validity. Let \(T_i\) refer the side-branch of node \(v_i\). In Case 1 of Anderson et. al's proof, it is necessary to ensure that trees \(T_3\) and \(T_4\) are the only ones that can potentially fracture in their construction. The justification that ensures \(T_1\) and \(T_2\) don't split applies to \(T_0\). In Case 2, \(b_0\) can technically be a potential ``dividing cut'' if it corresponds to a \(\angled{<}\) test, which could change the analysis of Anderson et al.'s proof. But this can't happen, as \(v_0\) being the first comparison in the tree implies that \(b_0\) is either least or greatest key, and thus can't be a middle cut-point.
	
%	Applying the same analysis as in Anderson et al., we get a new tree where the worst potential changes in cost are \(\Delta \le 2\parend{w\parend{T_0} + w\parend{T_1}} - w\parend{T}\) and \(\Delta \le 3w\parend{T_0} + 5w\parend{T_1} - w\parend{T}\). Since \(T_0\) and \(T_1\) correspond to the side weights of \(v_0\) and \(v_1\), we have that \(w\parend{T_1} \le w\parend{T_0}\) and \(w\parend{T_0} + w\parend{T_1} < \frac{1}{2} w\parend{T}\). This implies that \(\Delta < 0\), meaning \(T\) is not optimal.
	
%	\setcounter{case}{0}

\end{proof}

%%%%%%%%%%%%%%%%%%%%

\begin{corollary}\label{corollary: Second Heaviest Key Less-Than Thresshold}
	Consider an instance of $n \ge 2$ keys with total weight $W$. Let $\alpha$ and $\beta$ be the first and second heaviest weights respectively, where
	
	\[\alpha + \beta < \tfrac{1}{2}W.\]
	
	Then there exists no optimal tree that's rooted at two consecutive equal-to tests.
\end{corollary}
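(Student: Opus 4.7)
The plan is to argue by contradiction, assuming $T$ is an optimal tree whose root $v_0$ is an equal-to test and whose main-branch child $v_1$ is also an equal-to test, and to derive a contradiction by pinning down the side-weights $sw(v_0)$ and $sw(v_1)$ and then invoking Theorem~\ref{theorem: Side Weights Thresholds}.

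First I would dispense with the small cases. If $n=2$ then $\alpha+\beta=W$, so the hypothesis $\alpha+\beta<\tfrac12 W$ cannot hold; if $n=3$ then $W=\alpha+\beta+\gamma$ with $\gamma\le\beta\le\alpha$, and the hypothesis would force $\alpha+\beta<\gamma\le\beta$, a contradiction. So I can henceforth assume $n\ge 4$, which also ensures $n\ge 3$ as needed for Theorem~\ref{theorem: Side Weights Thresholds}.

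Next I would identify the keys tested at $v_0$ and $v_1$. Since $T$ is optimal on $n>2$ keys and its root is an equal-to test, Lemma~\ref{lemma: RMLK property} forces $v_0 = \equalstest{k_\alpha}$, so $sw(v_0)=\alpha$. The main sub-tree $T_1$ rooted at $v_1$ handles the $n-1\ge 3$ keys of $\Keys\setminus\{k_\alpha\}$ and must itself be optimal on that sub-instance, for otherwise replacing $T_1$ by a strictly cheaper tree on the same keys would reduce $\cost(T)$. Applying Lemma~\ref{lemma: RMLK property} to $T_1$ (which has more than $2$ keys and whose root $v_1$ is an equal-to test), the tested key must be the heaviest one in $\Keys\setminus\{k_\alpha\}$, which is $k_\beta$. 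Hence $sw(v_1)=\beta$.

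Finally I would invoke Theorem~\ref{theorem: Side Weights Thresholds} on $T$, obtaining
\[ \alpha+\beta \;=\; sw(v_0)+sw(v_1) \;\ge\; \tfrac{1}{2}\,w(T) \;=\; \tfrac{1}{2}W, \]
directly contradicting the hypothesis $\alpha+\beta<\tfrac12 W$. There is no real obstacle here: the corollary is essentially a packaging of Theorem~\ref{theorem: Side Weights Thresholds} together with Lemma~\ref{lemma: RMLK property}, the latter being what fixes the two top side-weights at exactly $\alpha$ and $\beta$ once both top tests are declared to be equal-to tests. The only mildly delicate point is verifying that the ``$n>2$'' hypothesis of Lemma~\ref{lemma: RMLK property} is met both at the root and one level down, which is precisely why the small-$n$ elimination above is carried out first.
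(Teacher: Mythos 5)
Your proof is correct and follows the same route as the paper's: invoke Lemma~\ref{lemma: RMLK property} to pin down $sw(v_0)=\alpha$ and $sw(v_1)=\beta$, then contradict Theorem~\ref{theorem: Side Weights Thresholds}. You have merely supplied the small-$n$ and subtree-optimality details that the paper leaves implicit.
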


\begin{proof}
If any tree is rooted at two consecutive equal-to tests with the above conditions, then by Lemma~\ref{lemma: RMLK property}, 
$sw\parend{v_0} = \alpha$ and $sw\parend{v_1} = \beta$, which contradicts~(\ref{equation: side weights thresholds}).
\end{proof}

%%%%%%%%%%%%%%%%%%%%

\begin{corollary}\label{corollary: First Side Weight Threshold}
	For any optimal tree $T$ rooted at $v_0$ for some instance of keys, $sw\parend{v_0} \ge \onefourth w\parend{T}$.
\end{corollary}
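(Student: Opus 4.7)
The plan is to deduce this corollary in essentially one line from Theorem~\ref{theorem: Side Weights Thresholds}, using Lemma~\ref{lemma: side-weight monoticity} to eliminate the $sw\parend{v_1}$ term in that theorem.

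First I would dispose of the trivial cases $n \le 2$: for $n = 1$ there is no internal node, and for $n = 2$ Lemma~\ref{lemma: RMLK property} lets us assume the root is an equal-to test on the heavier key, whence $sw\parend{v_0} \ge \tfrac{1}{2}\,w\parend{T}$, which is stronger than required. For $n \ge 3$, let $v_0$ be the root of $T$ and $v_1$ the root of its main branch. A quick case check shows that $v_1$ must be a comparison node rather than a leaf (otherwise the root's side branch plus its leaf main branch would account for only two keys in total, contradicting $n \ge 3$), so Theorem~\ref{theorem: Side Weights Thresholds} applies and yields $sw\parend{v_0} + sw\parend{v_1} \ge \tfrac{1}{2}\,w\parend{T}$.

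Then I would apply Lemma~\ref{lemma: side-weight monoticity} to the parent-child pair $v_0, v_1$ to obtain $sw\parend{v_0} \ge sw\parend{v_1}$. Substituting into the previous inequality gives $2\,sw\parend{v_0} \ge sw\parend{v_0} + sw\parend{v_1} \ge \tfrac{1}{2}\,w\parend{T}$, which rearranges to $sw\parend{v_0} \ge \onefourth\,w\parend{T}$, as desired.

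I do not anticipate any real obstacle here: the entire content of the corollary is already packaged in Theorem~\ref{theorem: Side Weights Thresholds}, and the derivation is a two-line manipulation using side-weight monotonicity. The only point to watch is ensuring $v_1$ is a comparison node so that the theorem's hypotheses are met, which is automatic for $n \ge 3$ by the case check above.
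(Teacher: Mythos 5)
Your proposal reproduces the paper's one-line derivation: combine Theorem~\ref{theorem: Side Weights Thresholds} with the side-weight monotonicity of Lemma~\ref{lemma: side-weight monoticity} to replace $sw\parend{v_1}$ by $sw\parend{v_0}$ and conclude $2\,sw\parend{v_0} \ge sw\parend{v_0} + sw\parend{v_1} \ge \tfrac{1}{2}w\parend{T}$. One small caveat on your preliminary ``case check'': the claim that for $n \ge 3$ the node $v_1$ must be a comparison node is not actually true when the root is a less-than test, since the main (heavier) branch can be a single heavy leaf while the side branch contains the remaining $n-1 \ge 2$ keys; however, this is harmless here, because the side-weight of a leaf is defined to be $0$, so Theorem~\ref{theorem: Side Weights Thresholds} still applies and your derivation goes through unchanged (indeed even more directly).
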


\begin{proof}
	This follows immediately from Lemma~\ref{lemma: side-weight monoticity} and Theorem~\ref{theorem: Side Weights Thresholds}.
\end{proof}

%%%%%%%%%%%%%%%%%%%%%%%%%%%%%%%%%%%%%%%%%%%%%%%%%%%%%%%%%%%%%%%%%%%%%%%%%%%%%%
%%%%%%%%%%%%%%%%%%%%%%%%%%%%%%%%%%%%%%%%%%%%%%%%%%%%%%%%%%%%%%%%%%%%%%%%%%%%%%

\subsection{Tightness of Thresholds}
\label{sec: tightness of thresholds}
%%%%%%%%%%%%%%%%%%%%%%%%%%%%%%%%%%%%%%%%%%%%%%%%%%%%%%%%%%%%%%%%%%%%%%%%%%%%%%%
%%%%%%%%%%%%%%%%%%%%%%%%%%%%%%%%%%%%%%%%%%%%%%%%%%%%%%%%%%%%%%%%%%%%%%%%%%%%%%%
%
%\subsection{Tightness of Thresholds}
%\label{sec: tightness of thresholds}
%\input{03-4_tightness_of_thresholds.tex}
%
%%%%%%%%%%%%%%%%%%%%%%%%%%%%%%%%%%%%%%%%%%%%%%%%%%%%%%%%%%%%%%%%%%%%%%%%%%%%%%%
%%%%%%%%%%%%%%%%%%%%%%%%%%%%%%%%%%%%%%%%%%%%%%%%%%%%%%%%%%%%%%%%%%%%%%%%%%%%%%%

As before, let $\alpha$ and $\beta$ be the heaviest weights, with $\alpha\ge\beta$. 
We now consider the regions in the $(\alpha,\beta)$-plane 
corresponding to guaranteed equal-to tests, no equal-to tests, or to no two-consecutive equal-to tests.
Some crucial properties have already been established in~\cite{atalig_chrobak_tight_threshold_23} and~\ref{theorem: Second Heaviest Key Equality Threshold}.
These bounds, together with the observations given below determine the regions illustrated in Figure~\ref{fig: thresholds graph}.
Notably, in this figure, all known bounds are straight lines, some of which form non-convex polygons. 

\begin{figure}[t]
	\centering
	\includegraphics{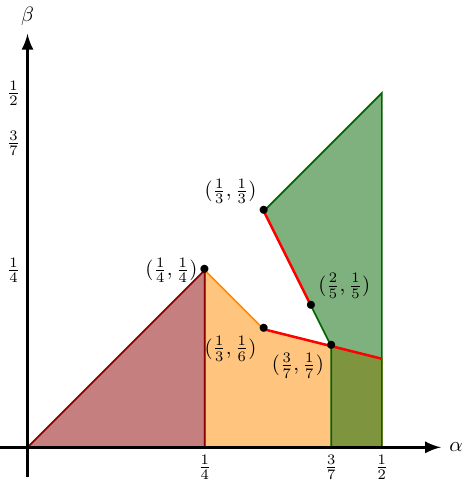}
	\caption{Plot of the thresholds on the first and second heaviest keys $\alpha$ and $\beta$, where the total weight is $1$. 
	The red region corresponds to no equal-to test, 
	orange correponds to no two-consecutive equal-to tests, and green corresponds to guaranteed equal-to test. The bright red lines indicate tight borders.}
	\label{fig: thresholds graph}
\end{figure}

Note that in addition to the bound given by Corollary~\ref{corollary: Second Heaviest Key Less-Than Thresshold}, 
we can trivially use $\lowlambda = \onefourth$ to establish another bound. We prove that this one is tight.

%%%%%%%%%%%%%%%%%%%%%%%%%%%

\begin{observation}\label{obs: other Second Heaviest Key Less-Than Threshold}
	Consider an instance of $n\ge 2$ keys with total weight $W$. Let $\alpha$ and $\beta$ be the first and second heaviest weights respectively, where 
	$\alpha + 4\beta < W$.
	Then there exists no optimal tree rooted at two consecutive equal-to tests.
\end{observation}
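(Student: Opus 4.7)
The plan is a short proof by contradiction that essentially pushes Theorem~\ref{theorem: lambda-} down one level in the tree. Suppose for contradiction that some optimal \twoWCST $T$ for the given instance is rooted at two consecutive equal-to tests. By Lemma~\ref{lemma: RMLK property}, the root of $T$ must be $\equalstest{k_\alpha}$, where $k_\alpha$ is the key of maximum weight $\alpha$. The ``yes'' child of any equal-to test is a leaf, so the second equal-to test must be the root of the ``no''-subtree $T_R$. This subtree $T_R$ handles exactly the remaining $n-1$ keys, with total weight $W - \alpha$ and maximum key weight equal to $\beta$ (since $k_\alpha$ is removed but the second-heaviest key lies in $T_R$).

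The key observation is that $T_R$ is itself an optimal \twoWCST for the sub-instance it handles: otherwise, replacing $T_R$ inside $T$ with a cheaper tree for the same keys would strictly decrease $\cost(T)$, contradicting optimality of $T$. Since $T_R$ is rooted at an equal-to test, we may apply (the contrapositive of) Theorem~\ref{theorem: lambda-} to the normalization of $T_R$'s instance: the maximum key weight in $T_R$, as a fraction of its total weight, must satisfy
\[
\frac{\beta}{W - \alpha} \;\ge\; \lowlambda \;=\; \onefourth .
\]
Rearranging yields $\alpha + 4\beta \ge W$, which directly contradicts the hypothesis $\alpha + 4\beta < W$.

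The proof really only has one delicate point, namely justifying the application of Theorem~\ref{theorem: lambda-} to a sub-instance rather than the original one; but this is immediate from the fact that a subtree of an optimal tree is optimal for the keys it handles and from scale-invariance of the threshold (the theorem is stated for normalized instances, but is really a statement about the ratio of the heaviest weight to the total). No other subtleties should arise, and no case analysis is needed, in contrast with the proofs in Section~\ref{sec: equality-test threshold for two heaviest keys} and Section~\ref{sec: threshold for side-weights}.
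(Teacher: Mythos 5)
Your proof is correct and follows essentially the same route as the paper: apply Lemma~\ref{lemma: RMLK property} to force the first equal-to test onto the heaviest key, observe the no-subtree has total weight $W-\alpha$ and heaviest key $\beta$, and then invoke Theorem~\ref{theorem: lambda-} (after rescaling) to rule out a second equal-to test. The paper states this directly rather than as a contradiction, and leaves implicit the subtree-optimality step you spell out, but these are presentation differences rather than a different argument.
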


Note that inequality $\alpha+4\beta < W$ implies that we in fact have $n\ge 5$.

\begin{proof}
If an optimal tree's root is an equal-to test, then it is on the $\alpha$-weight key, with the right branch having weight $W-\alpha$ 
and heaviest key having weight $\beta$. By the lemma's assumption, 
$\beta < \onefourth\parend{W - \alpha}$ and thus the next comparison can not be an equal-to test, by Theorem~\ref{theorem: lambda-}.
\end{proof}

%%%%%%%%%%%%%%%%%%%%%%%%%%%

\begin{observation}\label{obs: Tightness on Second Heaviest Key}
	The bound given by Theorem~\ref{theorem: Second Heaviest Key Equality Threshold} is tight for $\alpha < \twofifths$.
\end{observation}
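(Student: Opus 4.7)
The plan is to exhibit, for every $\alpha\in[\onethird,\twofifths)$ and every $\beta$ just below $1-2\alpha$, a normalized $4$-key instance whose two heaviest weights are exactly $\alpha$ and $\beta$ and for which no optimal \twoWCST is rooted at an equal-to test. Letting $\beta\to(1-2\alpha)^-$ then makes $2\alpha+\beta$ approach $1$ from below, showing that the sufficient condition $2\alpha+\beta\ge W$ in Theorem~\ref{theorem: Second Heaviest Key Equality Threshold} cannot be weakened in this range of $\alpha$.

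Concretely, I would fix $\alpha\in[\onethird,\twofifths)$ and $\beta\in\bigl(\tfrac{1-\alpha}{3},\,1-2\alpha\bigr)$; this interval is non-empty precisely because $\alpha<\twofifths$. Let $\gamma=1-\alpha-\beta$ and pick a small $\delta>0$ with $\gamma/2+\delta<\beta$, which is possible since $\beta>(1-\alpha)/3$ is equivalent to $\gamma/2<\beta$. The $4$-key normalized instance has weights $\alpha,\ \gamma/2-\delta,\ \gamma/2+\delta,\ \beta$ at positions $1,2,3,4$; all four are distinct, $\alpha$ is the unique maximum, and $\beta$ is the unique second maximum. By Lemma~\ref{lemma: RMLK property}, the only equal-to root to consider is $\equalstest{k_1}$, so I then enumerate the four candidate roots (one equal-to and three less-than tests) and compute each of their costs in closed form.

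The bulk of the work is determining the optimal cost of the two $3$-key subtrees appearing in the recursion: the subtree on $\{2,3,4\}$ (needed for $\equalstest{k_1}$ and for $\lessthantest{2}$) and the subtree on $\{1,2,3\}$ (needed for $\lessthantest{4}$). A short case analysis using $w_2<w_3<\beta$ for the first and $w_2<w_3<\alpha$ for the second shows that their optimal costs are $(1-\alpha)+\gamma$ and $(1-\beta)+\gamma$, respectively. Plugging these in gives
\[
\cost(\equalstest{k_1})=\cost(\lessthantest{2})=3-2\alpha-\beta,\quad
\cost(\lessthantest{3})=2,\quad
\cost(\lessthantest{4})=3-\alpha-2\beta.
\]
Since $\beta<1-2\alpha$ we have $3-2\alpha-\beta>2$, so $\equalstest{k_1}$ is strictly worse than $\lessthantest{3}$, which means no optimal tree has an equal-to root. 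The main obstacle is just the bookkeeping in the $3$-key subtree cost analysis; once the two closed-form subtree costs are in place, the conclusion is immediate from the strict inequality $3-2\alpha-\beta>2$.
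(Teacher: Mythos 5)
Your proof is correct and takes essentially the same approach as the paper: both construct a normalized $4$-key instance with the two heaviest weights summing (with the heaviest doubled) to just under $W$, compute the optimal costs with and without an equal-to root, and observe that the less-than root $\lessthantest{3}$ wins strictly. The only differences are cosmetic — the paper places the heavy keys at positions 2 and 3 with light keys at the ends and perturbs along $2\alpha+\beta=1$ by $\epsilon$, whereas you place them at positions 1 and 4 and let $\beta\to(1-2\alpha)^-$ directly.
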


\begin{proof}
	Let $\beta \le \alpha$, $2\alpha + \beta  = 1$, and $\epsilon > 0$. 
	Consider the instance with four keys and weights given in the table below:
	
	\newlength{\mycolwidth}
	{
	\medskip
	\begin{center}
	\setlength{\mycolwidth}{0.42in}\renewcommand{\arraystretch}{1.1}
	\begin{tabular}{| m{\mycolwidth} | m{\mycolwidth} | m{\mycolwidth} | m{\mycolwidth} |} \hline
		{1}  & {2}  & {3}  & {4}  \\ \hline 
		{$\half\alpha$} & {$\alpha$} & {$\beta-\epsilon$}  & {$\half\alpha+\epsilon$}  \\ \hline
	\end{tabular}
	\end{center}
	\medskip
	}
	
	Note that for $\alpha < \twofifths $ we have $\half \alpha < \beta$, so if $\epsilon$ is sufficiently small then
	$\alpha$ and $\beta - \epsilon$ are the two heaviest weights. 
	The best tree starting with an equal-to test is uniquely given by performing $\equalstest{2}$ then $\equalstest{3}$ 
	with cost $C^= = 4\alpha + 2\beta  + \epsilon$, 
	while the best tree starting with a less-than test is given by performing $\lessthantest{3}$ with cost $C^< = 4\alpha + 2\beta$. 
	Then $C^< < C^=$ and
	the two heaviest weights satisfy $2\alpha  + (\beta -\epsilon)  < 1$.
	 (In Figure~\ref{fig: thresholds graph} this corresponds to the vector $\parend{0 ,-\epsilon}$ pointing down from the line $2\alpha + \beta = 1$.)
	 This holds for $\epsilon$ arbitrarily small.
\end{proof}

%%%%%%%%%%%%%%%%%%%%%%%%%%%

\begin{observation}\label{obs: Tightness on Two Consecutive Equals Tests}
	The bound given by Observation~\ref{obs: other Second Heaviest Key Less-Than Threshold} is tight for $\alpha \ge \onethird$.
\end{observation}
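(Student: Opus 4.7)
The plan is to prove tightness by exhibiting, for every $\alpha\in[\onethird,1)$ and every $\eta>0$, a normalized instance with heaviest weight $\alpha$ and second heaviest weight $\beta$ satisfying $\alpha+4\beta=1+\eta$ and admitting an optimal \twoWCST rooted at two consecutive equal-to tests $\equalstest{k_\alpha}$ followed by $\equalstest{k_\beta}$. Taking $\eta\to 0^+$ then shows that the bound $\alpha+4\beta<W$ in Observation~\ref{obs: other Second Heaviest Key Less-Than Threshold} cannot be strengthened.

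My construction sets $\beta=\tfrac{1-\alpha}{4}+\tfrac{\eta}{4}$ and distributes the remaining mass $1-\alpha-\beta=\tfrac{3(1-\alpha)}{4}-\tfrac{\eta}{4}$ among a small number of filler keys, whose values are calibrated so that two things happen at once. First, after removing $k_\alpha$, the resulting sub-instance (of total weight $1-\alpha$) should admit an optimal \twoWCST rooted at $\equalstest{k_\beta}$; since the relative weight of $\beta$ in this sub-instance is $\tfrac14+\Theta(\eta)$, just above the threshold $\lowlambda=\tfrac14$ from Theorem~\ref{theorem: lambda-}, the known tight construction for $\lowlambda$ from~\cite{Anderson2002} provides a template (the fillers are taken to be a slight perturbation that breaks the tie in favor of the equal-to test on $k_\beta$). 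Second, placing $k_\alpha$ at an endpoint of the key ordering ensures that the top-level $\equalstest{k_\alpha}$ and the less-than split that isolates $k_\alpha$ (namely $\lessthantest{2}$ if $k_\alpha$ is leftmost) have identical cost, while every other less-than root test pushes $k_\alpha$ into a non-trivial sub-tree.

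The verification proceeds in three steps. First, I would compute the cost of the candidate tree with $\equalstest{k_\alpha}$ at the root and $\equalstest{k_\beta}$ at the second level, which equals $1+\cost(\text{sub-instance})$. Second, I would invoke the tight construction for $\lowlambda=\tfrac14$ to certify that the second level is optimal within the sub-instance. Third, I would compare $1+\cost(\text{sub})$ against the cost of every alternative $\lessthantest{r}$ at the root: using Lemma~\ref{lemma: Split Tree property} to bound the cost of splits that merge $k_\alpha$ with some fillers, together with the trivial cost equivalence for the split that isolates $k_\alpha$ at an endpoint, I would conclude that no alternative is strictly cheaper.

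The hard part is the third step in the range $\alpha\in[\onethird,\threesevenths)$, where neither Theorem~\ref{theorem: lambda+ lower bound} nor Theorem~\ref{theorem: Second Heaviest Key Equality Threshold} immediately guarantees optimality of $\equalstest{k_\alpha}$ at the root (the latter fails because $2\alpha+\beta=\tfrac{7\alpha+1}{4}+\tfrac{\eta}{4}<1$ in this regime). Here one must quantify, using the structure of the sub-instance's optimal tree, that the penalty incurred by placing $k_\alpha$ in a non-trivial sub-tree of any root less-than test strictly exceeds any savings from a cleverer split of the remaining fillers. The endpoint placement of $k_\alpha$ and the filler calibration are designed precisely so that this quantitative bound holds simultaneously with the second-level equal-to optimality of Step~2, and balancing these two requirements as $\eta\to 0^+$ is the crux of the argument.
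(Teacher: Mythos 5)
Your proposal is a plan, not a proof: you explicitly flag Step~3 as ``the hard part'' and defer the crux (``balancing these two requirements as $\eta\to 0^+$ is the crux of the argument''), but never carry it out. You also never pin down the fillers --- their number, their weights, or their placement --- so there is nothing concrete to verify. In the regime $\alpha\in[\onethird,\threesevenths)$ none of the threshold theorems you cite guarantee an equal-to test at the root, which is precisely why you need to finish that step, and the appeal to Lemma~\ref{lemma: Split Tree property} doesn't deliver what you want: it \emph{lower}-bounds $\cost(T)$ by the cost of optimal trees on a partition, which does not directly bound the cost of an arbitrary less-than root split from below in a usable way.

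The paper sidesteps all of this by writing down a single explicit $8$-key instance --- weights $(\alpha,0,\beta,0,\beta,\beta,0,\beta)$ with $\alpha+4\beta=1$ --- and comparing the best equal-to-rooted cost $C^{=}=\alpha+14\beta$ against the best less-than-rooted cost $C^{<}=2\alpha+12\beta$ directly. The difference $C^{=}-C^{<}=2\beta-\alpha$ is $\le 0$ exactly when $\alpha\ge\onethird$ (since $\beta=(1-\alpha)/4$), which is a one-line finish. No threshold theorems, no limiting argument, no case analysis over the root test of a sub-instance. If you want to salvage your route, you would need to actually specify the fillers and then do the cost comparison anyway, at which point you have essentially reconstructed the paper's explicit example; the abstraction via Theorem~\ref{theorem: lambda-} and Lemma~\ref{lemma: Split Tree property} adds work without removing the need for that computation.
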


\begin{proof}
	Let $\beta \le \alpha$, $\alpha + 4\beta = 1$, and $\alpha\ge \onethird$.
	Consider the instance with eight keys  whose weights are given in the table below

	{
	\medskip
	\begin{center}
	\setlength{\mycolwidth}{0.1in}
	\begin{tabular}{| m{\mycolwidth} | m{\mycolwidth} | m{\mycolwidth} | m{\mycolwidth} | m{\mycolwidth} | m{\mycolwidth} | m{\mycolwidth} | m{\mycolwidth} |} \hline
		{1}  & {2}  & {3}  & {4} & 5 & 6 & 7 & 8 \\ \hline
		{$\alpha$} & {$0$} & {$\beta$}  & {$0$} & $\beta$ & $\beta$ & $0$ & $\beta$  \\ \hline
	\end{tabular}
	\end{center}
	\medskip
	}
	
	Note that $\alpha \ge \onethird$ implies $\beta \le \onesixth$. By case analysis, one can show that the optimal cost starting at an equal-to test is $C^= = \alpha + 14 \beta$, 
	which can be achieved by doing only equal-to tests. 
	(Alternatively, start with $\equalstest{1}$, then do $\lessthantest{5}$, followed by equal-to tests.)
	Likewise, the optimal cost starting at a less-than test is $C^< = 2\alpha + 12\beta$, which can be achieved 
	using $\lessthantest{4}$ or $\lessthantest{5}$ or $\lessthantest{6}$ as the root, followed by only equal-to tests in each branch of the root. We have $C^= - C^< = 2\beta - \alpha$, and by $\alpha \ge \onethird$ and $\beta \le \onesixth$ we have $C^= \le C^<$.
\end{proof}

%%%%%%%%%%%%%%%%%%%%%%%%%%%%%%%%%%%%%%%%%%%%%%%%%%%%%%%%%%%%%%%%%%%%%%%%%%%%%%
%%%%%%%%%%%%%%%%%%%%%%%%%%%%%%%%%%%%%%%%%%%%%%%%%%%%%%%%%%%%%%%%%%%%%%%%%%%%%%

%%%%%%%%%%%%%%%%%%%%%%%%%%%%%%%%%%%%%%%%%%%%%%%%%%%%%%%%%%%%%%%%%%%%%%%%%%%%%%
%%%%%%%%%%%%%%%%%%%%%%%%%%%%%%%%%%%%%%%%%%%%%%%%%%%%%%%%%%%%%%%%%%%%%%%%%%%%%%
%%%%%%%%%%%%%%%%%%%%%%%%%%%%%%%%%%%%%%%%%%%%%%%%%%%%%%%%%%%%%%%%%%%%%%%%%%%%%%

\section{Structure Results for Geometric Sequences}
\label{sec: structure results for geometric sequences}

%%%%%%%%%%%%%%%%%%%%%%%%%%%%%%%%%%%%%%%%%%%%%%%%%%%%%%%%%%%%%%%%%%%%%%%%%%%%%%
%%%%%%%%%%%%%%%%%%%%%%%%%%%%%%%%%%%%%%%%%%%%%%%%%%%%%%%%%%%%%%%%%%%%%%%%%%%%%%

%\section{Structure Results for Geometric Sequences}
%\label{sec: structure results for geometric sequences}
%\input{04_structure_results_for_geometric_sequences.tex}

%%%%%%%%%%%%%%%%%%%%%%%%%%%%%%%%%%%%%%%%%%%%%%%%%%%%%%%%%%%%%%%%%%%%%%%%%%%%%%
%%%%%%%%%%%%%%%%%%%%%%%%%%%%%%%%%%%%%%%%%%%%%%%%%%%%%%%%%%%%%%%%%%%%%%%%%%%%%%

As we show later in Section~\ref{sec: non-monotonicity of dynamic-programming minimizers}, it appears that the most difficult 
instances are ones whose weights form a geometric sequence. For 
that reason, understanding the properties of such instances and finding a special way of handling them may yield an algorithm 
that beats $O\parend{n^4}$.

Think of the dynamic programming algorithm as a memoized recursive algorithm.
What instances will be generated? Rescale the weights so that the heaviest
weight is $1$. If the weights decrease faster than $(1-\highlambda)^i$ 
(that is $\gamma < 1-\highlambda$) then, by the bound in~\cite{atalig_chrobak_tight_threshold_23},
the optimum choice at each step is an equal-to test, because after removing
the heaviest key, the weight of the new heaviest key is more than
$\highlambda$ times the new total.  That gives a trivial algorithm with complexity $O(n\log n)$.

On the other hand, suppose that the weights decrease slower than $\large(\threefourths\large)^i$.
(That is, $\gamma > \threefourths = 1 - \lowlambda$.) Then
the situation is not quite symmetric. At the first step we know for certain we will
make a less-than test, but since the weights will get partitioned between the
two intervals, the maximum key weights in these intervals could be large
compared to their total weights, so for these sub-instances equal-to tests may be
optimal. 

Assume now that the weights of an instance form the (infinite) geometric
sequence $(\gamma^i)_{i\ge 0}$. Then the total weight of the instance is $1/(1-\gamma)$
and the heaviest weight $1$ is a $1-\gamma$ fraction of the total.
Define $\lowgamma$ be the largest value such that for all $\gamma \in (0,\lowgamma)$ and
for each geometric instance with ratio $\gamma$ there is an optimal tree starting with the equal-to test.
By the bound on $\highlambda$ in~\cite{atalig_chrobak_tight_threshold_23}, we know that $\lowgamma\ge 1-\highlambda$ = $\frac{4}{7}$.
Below we give a tight bound.
		
%%%%%%%%%%

\begin{theorem}\label{theorem: lowgamma}
	$\lowgamma = \phi = \parend{\sqrt{5}-1}/2 \approx 0.618$.
\end{theorem}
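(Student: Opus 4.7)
I would prove $\lowgamma \ge \phi$ and $\lowgamma \le \phi$ separately, both by directly manipulating the cost of the tree built from equal-to tests on the heaviest remaining key. Throughout, let $S_m := (1-\gamma^m)/(1-\gamma)$ be the total weight of the $m$-key geometric instance (scaled so the heaviest key has weight $1$) and $C(m)$ its optimum cost. Let $C^=(m) := S_m + \gamma\, C(m-1)$ be the cost of rooting the tree at $\equalstest{1}$ and putting an optimal subtree below (the right child is a $\gamma$-scaled copy of the $(m-1)$-key geometric instance), and let $C^{<, l}(m) := S_m + C(l) + \gamma^l\, C(m-l)$ be the analogous cost for a root $\lessthantest{l+1}$ with optimal subtrees. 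By Lemma~\ref{lemma: RMLK property}, any equal-to test at the root must be on key $1$, so the theorem reduces to comparing $C^=(n)$ with $\min_l C^{<, l}(n)$.

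\textbf{Lower bound $\lowgamma \ge \phi$.} I would use induction on $n$; the base cases $n \le 2$ are immediate. Assume $C(m) = C^=(m)$ for every $m < n$; unrolling $C(m) = S_m + \gamma C(m-1)$ yields the closed form
\[
C(m) \;=\; \frac{S_{m-1} - (m-1)\gamma^m}{1-\gamma}.
\]
Substituting into $\gamma C(n-1) \le C(l) + \gamma^l C(n-l)$ and using the telescoping identities $\gamma S_{n-2} = S_{n-1}-1$ and $S_{l-1} + \gamma^l S_{n-l-1} = S_{n-1} - \gamma^{l-1}$, the inequality collapses to
\[
f(l) - 1 \;\le\; (l-1)\gamma^n, \qquad f(l) := \gamma^{l-1}\bigl(1 + (l-1)\gamma\bigr).
\]
The case $l=1$ is the trivial $0 \le 0$. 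For $l \ge 2$ it suffices to show $f(l) \le 1$. At $l=2$ this is $\gamma + \gamma^2 \le 1$, which is precisely the defining inequality of $\phi$. For $l \ge 2$ one computes
\[
\frac{f(l+1)}{f(l)} \;=\; \gamma + \frac{\gamma^2}{1 + (l-1)\gamma} \;<\; \gamma + \gamma^2 \;\le\; 1,
\]
so $f$ is strictly decreasing on $\{2, 3, \dots\}$ and therefore $f(l) \le f(2) \le 1$ for all $l \ge 2$. This closes the induction.

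\textbf{Upper bound $\lowgamma \le \phi$.} Fix any $\gamma > \phi$ and suppose, for contradiction, that every geometric instance (for every $n$) has an optimal tree starting with an equal-to test. Then $C(m) = C^=(m)$ for every $m$ and the same closed form applies. Specializing the simplified inequality to $l = 2$ yields
\[
C^=(n) - C^{<, 2}(n) \;=\; \frac{\gamma + \gamma^2 - 1 - \gamma^n}{1-\gamma}.
\]
Since $\gamma > \phi$ gives $\gamma + \gamma^2 - 1 > 0$ and $\gamma^n \to 0$, this difference is strictly positive for $n$ sufficiently large, contradicting $C(n) = C^=(n) \le C^{<, 2}(n)$.

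\textbf{Where I expect difficulty.} The bulk of the work is the algebraic reduction showing that the entire family of inequalities indexed by $l$ is controlled by the single quantity $f(l)$, and that $f$ attains its maximum over $l \ge 2$ at $l = 2$. This is what forces the golden ratio to emerge: the critical condition $f(2) \le 1$ is exactly $\gamma + \gamma^2 \le 1$, which defines $\phi$. Once $f$ has been isolated, both the monotonicity argument for the lower bound and the sign-flip at $l=2$ for the upper bound follow quickly from the same formula.
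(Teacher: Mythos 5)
Your algebraic route is genuinely different from the paper's rotation argument, and the computation is correct as far as it goes: the reduction to $f(l) = \gamma^{l-1}\bigl(1+(l-1)\gamma\bigr) \le 1$ is a clean way to see why $\phi$ appears, and the upper-bound half is a sound finite-$n$ version of the paper's infinite-sum comparison.

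There is, however, a gap in the scope of the lower bound. Your induction tracks $C(m)$ only for the instance whose keys are arranged in decreasing-weight order, and compares the equal-to root only against the prefix/suffix splits $C^{<,l}(m) = S_m + C(l) + \gamma^l C(m-l)$ that this particular ordering makes available. But $\lowgamma$ quantifies over \emph{every} geometric instance with ratio $\gamma$, that is, every assignment of the weight multiset $\{1,\gamma,\gamma^2,\dots\}$ to the linearly ordered keys. Under a different ordering a less-than test can separate the weights into essentially any two nonempty groups (for instance $\{1,\gamma^2,\gamma^4,\dots\}$ on one side and $\{\gamma,\gamma^3,\dots\}$ on the other), and your recurrence never confronts such a split; moreover, recursing into such subtrees produces arbitrary subsets of the geometric sequence rather than scaled prefixes, so the inductive hypothesis $C(m)=C^=(m)$, stated only for prefix instances, is too weak to close the induction in general. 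This is exactly why the paper's lower-bound proof is formulated in a strengthened form, allowing the weights to be any subset of $(\gamma^i)_i$ and allowing arbitrary yes/no queries at the internal nodes, and then argues by rotations whose only quantitative input is the single bound $\gamma^2/(1-\gamma) < 1$ when $\gamma<\phi$: that generality is what makes the induction self-contained across all orderings. To repair your argument you would need either a separate lemma showing that the decreasing-weight ordering is extremal, or to carry out the induction over all weight subsets rather than just prefixes.
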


\begin{proof} $(\le)$
	To show that $\lowgamma \le \phi$,
	consider the gemeotric instance with weights $(\gamma^i)_i$, and with the corresponding keys
	ordered from left to right in decreasing order of weight. 
	Let $C^=$ and $C^<$ be the optimal costs of trees starting with the equal-to test
	and less-than test, respectively. If starting with the equal-to test is optimal
	for ratio $\gamma$, then the optimal tree will have only equal-to tests, 
	and its cost is 
	\begin{equation*}
		C^= \;=\; \sum_i (i+1) \gamma^i \;=\; \frac{1}{ (1-\gamma)^2 }
	\end{equation*}
	The tree for $C^<$ that splits the instance between $\gamma$ and $\gamma^2$
	will apply the same split recursively, so computing its cost we get 
	
	\begin{equation*}
		C^< \;\le\; \sum_i (i+2)(\gamma^{2i} + \gamma^{2i+1}) 
		\;=\; 	(\gamma+1) \sum_i (i+2) \gamma^{2i}
		\;=\; \frac{1}{(1-\gamma)^2} \cdot \frac{2-\gamma^2}{1+\gamma}
	\end{equation*}
	Equating these values, gives $2-\gamma^2 = 1+\gamma$, so $\gamma = \phi = 0.681 ...$.
	If $\gamma > \phi$ then $2-\gamma^2 < 1+\gamma$, in which case $C^< < C^=$.
	So for $\gamma > \phi$, this instance does not have an optimal tree starting with the equal-to test.

$(\ge)$ We now show that $\lowgamma \ge \phi$.
	Consider an arbitrary geometric instance with ratio $\gamma$, for some $\gamma < \phi$.
	We will show that any such instance has a search tree that uses only
	equal-to tests. We will in fact show something stronger, proving that this
	property holds even if we allow more general instances and queries:
	\begin{itemize}
		\item The key weights in the instance can be any subset of the geometric
		sequence $(\gamma^i)_i$ (not necessarily the whole sequence).
		\item The tree can use any pre-specified set of arbitrary queries, as long as it includes
			the equals-to tests to all keys.
			We refer to queries in this set as \emph{allowed}.
			Each allowed query can be represeted as $\angled{\in X}$, where $X$ 
			is the set of keys that satisfies this query. 
			The equal-to test $\equalstest{k}$ is a special case when $X = \braced{k}$.
	\end{itemize} 
	For convenience, we identify keys by their weights, so, for example,
	 $\equalstest{1}$ refers to an equal-to test on the key of weight $1$.
	
	It is now sufficient to prove the following claim: Let $T$ be a search tree using allowed queries for an
	instance of $n$ keys whose weights are a subset of the sequence $(\gamma^i)_i$.
	Then $T$ can be converted, without increasing cost, into a search tree $T'$ that 
	uses only equals-to tests and its root is an equals-to test to the heaviest key.
 	
	The claim is trivial for $n=1,2,3$. So assume that $n\ge 4$
	and that the claim holds for trees for instances that have fewer than $n$ keys.
	Consider an instance with $n$ keys. Without loss of generality, we can assume that $1$ is in this instance. 
	Let $T$ be a tree for this instance, and let $L$ and $R$ be its left (yes) and right (no) subtrees.
	By induction, $L$ and $R$ have only equals-to tests, with their heaviest key at their
	respective roots.

	If $L$ or $R$ is a leaf, then we can assume that
	the root of $T$ is an equals-to test. If this test is $\equalstest{1}$,
	then we are done, that is $T' = T$. Else, this test is $\equalstest{\gamma^a}$ for some $a\ge 1$,
	and $R$'s root is $\equalstest{1}$. We can then obtain $T'$ by swapping these two
	equals-to nodes.

%%%%%%%%%%%%%%%%%%%
	
	\begin{figure}[t]
		\begin{center}
			\begin{subfigure}[b]{0.44\textwidth}
				\includegraphics[valign=m, scale=0.9]{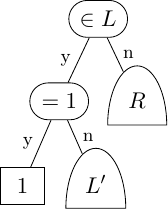}
				\quad$\boldsymbol{\Longrightarrow}$\quad
				\includegraphics[valign=m, scale=0.9]{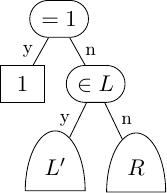}
				\caption{If $R$ doesn't contain $\gamma$.} \label{fig: lowgamma figa}
			\end{subfigure}
			\begin{subfigure}[b]{0.54\textwidth}
				\includegraphics[valign=m, scale=0.9]{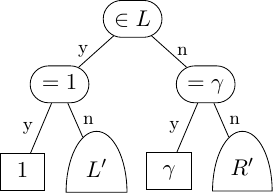}
				\quad$\boldsymbol{\Longrightarrow}$\quad
				\includegraphics[valign=m, scale=0.9]{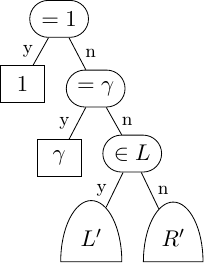}
				\caption{If $R$ contains $\gamma$.} \label{fig: lowgamma figb}
			\end{subfigure}
		\end{center}
		\caption{Illustration of the lower bound for $\lowgamma$.}
		\label{fig: lowgamma}
	\end{figure}

%%%%%%%%%%%%%%%%%%%	

	So from now on we can assume that each of $L$ and $R$ has at least two leaves.
	The root of $T$ is a query $\angled{\in L}$. (Slightly abusing notation, we use $L$ both
	for a subtree and its set of keys.)
	We can assume that $1$ is in $L$, for otherwise in the rest of the proof
	we can swap the roles of $L$ and $R$.
	Then, by the inductive assumption, the root of $L$ is $\equalstest{1}$.
	Let $L'$ be the no-subtree of the node $\equalstest{1}$.
	If $R$ does not contain $\gamma$ then its total weight is 
	at most $\gamma^2/(1-\gamma) < \phi^2/(1-\phi) = 1$, in which case we can
	rotate $\equalstest{1}$ upwards, obtaining tree $T'$ with smaller cost (see Figure~\ref{fig: lowgamma figa}).
	
	If $R$ contains $\gamma$ then, by the inductive assumption, its root
	is $\equalstest{\gamma}$. Let $R'$ be the no-subtree of $\equalstest{\gamma}$. 
	Then the largest key weight in $L'\cup R'$ is at most $\gamma^2$,
	so the total weight of $L'\cup R'$ is less than $1$ (as in the argument above).
	Then we can rotate upwards both tests $\equalstest{1}$ and $\equalstest{\gamma}$, 
	as in Figure~\ref{fig: lowgamma figb}, obtaining our new tree $T'$ with smaller cost.
\end{proof}

We can similarly define the other threshold for geometric sequences:
Let $\highgamma$ be the smallest value such that for all $\gamma \in (\highgamma,1)$
and each geometric instance with ratio $\gamma$ the optimal tree for
this instance does not have the equal-to test at the root.
By the bound on $\lowlambda$, we know that $\highgamma \le 1-\lowlambda = \threefourths$.
Some improvements to this bound are probably not difficult, but
establishing a tight bound seems more challenging.

%%%%%%%%%%%%%%%%%%%%%%%%%%%%%%%%%%%%%%%%%%%%%%%%%%%%%%%%%%%%%%%%%%%%%%%%%%%%%%
%%%%%%%%%%%%%%%%%%%%%%%%%%%%%%%%%%%%%%%%%%%%%%%%%%%%%%%%%%%%%%%%%%%%%%%%%%%%%%
%%%%%%%%%%%%%%%%%%%%%%%%%%%%%%%%%%%%%%%%%%%%%%%%%%%%%%%%%%%%%%%%%%%%%%%%%%%%%%

\section{Attempts at Dynamic Programming Speedup}
\label{sec: non-monotonicity of dynamic-programming minimizers}

%%%%%%%%%%%%%%%%%%%%%%%%%%%%%%%%%%%%%%%%%%%%%%%%%%%%%%%%%%%%%%%%%%%%%%%%%%%%%%
%%%%%%%%%%%%%%%%%%%%%%%%%%%%%%%%%%%%%%%%%%%%%%%%%%%%%%%%%%%%%%%%%%%%%%%%%%%%%%

%\section{Non-Monotonicity of Dynamic-Programming Minimizers}
%\label{sec: non-monotonicity of dynamic programming minimizers}
%\input{05_non-monotonicity_of_dynamic-programming_minimizers.tex}

%%%%%%%%%%%%%%%%%%%%%%%%%%%%%%%%%%%%%%%%%%%%%%%%%%%%%%%%%%%%%%%%%%%%%%%%%%%%%%
%%%%%%%%%%%%%%%%%%%%%%%%%%%%%%%%%%%%%%%%%%%%%%%%%%%%%%%%%%%%%%%%%%%%%%%%%%%%%%

In the dynamic programming formulations of some optiomization problems,
the optimum cost function satisfies the so-called \emph{quadrangle inequality}.
In its different incarnations this property is sometimes referred to as
the \emph{Monge property} and is also closely related to the \emph{total monotonicity of matrices}
(see~\cite{bein_et_al_knuth-yao_total_monotonicity_09}).
The quadrangle inequality was used, in particular, to obtain
$O(n^2)$-time algorithms for computing optimal 3-way comparison search trees
\cite{Knuth1971,yao_efficient_dynamic_programming_80}.

Recall that in our dynamic programming formulation (Section~\ref{sec: preliminaries})
by $C^h_{i,j}$ we denote the minimum cost of the \twoWCST for $I^h_{i,j}$,
which is the sub-problem that includes the keys in the interval $[i,j]$
which are also among the $h$ lightest keys.
In this context,
the quadrangle inequality would say that for all $i \le i' \le j \le j'$ we have
\begin{eqnarray}
	C^h_{i,j} + C^h_{i',j'} &\le&
	C^h_{i,j'} + C^h_{i',j}.
	\label{eqn: quadrangle inequality}
\end{eqnarray}
The consequence of the quadrangle inequality is that the minimizers along the
diagonals of the dynamic programming matrix satisfy a certain monotonicity
property. Namely, for any $i,j,h$, let $L^h_{i,j}$ be the minimizer
for $S^h_{i,j}$, defined as the index $l$ such that $S^h_{i,j} = C^h_{i,l} + C^h_{l+1,j}$.
Then the quadrangle inequality implies that 
\begin{eqnarray}
	L^h_{i,j-1} \;\le\; L^h_{i,j} \;\le\; L^h_{i+1,j}.
	\label{eqn: minimizer monotonicity}
\end{eqnarray}
With this monotonicity property, we would only need to search for $L^h_{i,j}$ between
$L^h_{i,j-1}$ and $L^h_{i+1,j}$, thus in the diagonal $j-i$ we search disjoint
intervals. For some problems, conceivably, this property may be true even if the
quadrangle inequality is not, and it would be in itself sufficient to speed up dynamic programming.

In this section we show, however, that this approach does not apply. We show
that the quadrangle inequality~(\ref{eqn: quadrangle inequality}) is false,
and that the monotonicity property~(\ref{eqn: minimizer monotonicity}) is false
as well. We also give counter-examples to other similar properties that could
potentially help in designing a faster algorithm.

%%%%%%%%%%%%%%%%%%%%%%%%%%%%%%%%%%%%%%%%%%%%%%%%%%%%%%%%%%%%%%%%%%%%%%%%%%%%%%
%%%%%%%%%%%%%%%%%%%%%%%%%%%%%%%%%%%%%%%%%%%%%%%%%%%%%%%%%%%%%%%%%%%%%%%%%%%%%%

\subsection{Quadrangle Inequality and Minimizer Monotonicity}
\label{sec: Quadrangle Inequality and Minimizer Monotonicity}
%%%%%%%%%%%%%%%%%%%%%%%%%%%%%%%%%%%%%%%%%%%%%%%%%%%%%%%%%%%%%%%%%%%%%%%%%%%%%%%
%%%%%%%%%%%%%%%%%%%%%%%%%%%%%%%%%%%%%%%%%%%%%%%%%%%%%%%%%%%%%%%%%%%%%%%%%%%%%%%
%
%\subsection{Quadrangle Inequality and Minimizer Monotonicity}
%\label{sec: Quadrangle Inequality and Minimizer Monotonicity}
%\input{05-1_quadrangle_inequality_and_minimizer_monotonicity.tex}
%
%%%%%%%%%%%%%%%%%%%%%%%%%%%%%%%%%%%%%%%%%%%%%%%%%%%%%%%%%%%%%%%%%%%%%%%%%%%%%%%
%%%%%%%%%%%%%%%%%%%%%%%%%%%%%%%%%%%%%%%%%%%%%%%%%%%%%%%%%%%%%%%%%%%%%%%%%%%%%%%

\paragraph{Counter-examples to quadrangle inequality.}
Here is an example that the quadrangle inequality~(\ref{eqn: quadrangle inequality}) does not hold
in our case. We have three ikeys $1,2,3$, with weights $1,V,1$. Let $i=1, i' = 2 = j, j' = 3$.
We claim that
\begin{eqnarray*}
	C^3_{1,2} + C^3_{2,3} &>&
	C^3_{1,3} + C^3_{2,2}.
\end{eqnarray*}
Indeed, $C^3_{1,2} = V+1$, $C^3_{2,3} = V+1$, while
$C^3_{1,3} = V+4$ and $C^3_{2,2} = 0$.
Thus the quadrangle inequality~(\ref{eqn: quadrangle inequality}) is false.

This easily generalizes to larger instances. Let the keys be $1,2,...,n$, for $n = 2^{s+1}+1$.
The instance consists of a block of $2^s$ keys of weight $1$, followed by a key
of weight $V$, followed by another block of $2^s$ keys of weight $1$.
Let $j = 2^s+1$ be the key of weight $V$.

The optimum tree for an instance consisting of $2^s$ keys of weight $1$ is a perfectly
balanced tree and its cost is $s2^s$. So, if $V$ is large enough then
$C^n_{1,j} = C^n_{j,n} = V + (s+1)2^s$ and $C^n_{1,n} = V+ (s+2)2^{s+1}$.
Since $C^n_{j,j} = 0$, we then have
\begin{equation*}
	C^n_{1,j} + C^n_{j,n} \;=\; 2V + (s+1)2^{s+1} \;>\; C^n_{j,j} + C^n_{1,n},
\end{equation*}
as long as $V > 2^{s+1}$.

%%%%%%%%%%%%%

\paragraph{Counter-example to minimizer monotonicity.}
We now consider~(\ref{eqn: minimizer monotonicity}), showing that it's
also false. Consider six keys, with the following weights:

\smallskip
\begin{center}
\begin{tabular}{|c|c|c|c|c|c|}\hline
	1 & 2 & 3 & 4 & 5 & 6 \\ \hline
	1 & V & 1 & 1 & V & 1 \\ \hline
\end{tabular}
\end{center}
\smallskip

\noindent
where $V > 8$. We claim that the unique minimizers for $C^6_{1,6}$ and $C^6_{2,6}$ satisfy
\begin{eqnarray*}
	L^6_{1,6} \;=\; 3 \; > \; 2 \;=\; L^6_{2,6}.
\end{eqnarray*}
Indeed, we have $C^6_{1,3} + C^6_{4,6} = (V+4) + (V+4) = 2V+8$
(each cost is realized by a tree with the equal-to test for the heavy item in the root).
On the other hand, $C^6_{1,1} + C^6_{2,6} \ge 0 + 3V = 3V$, since in any
tree for $I^6_{2,6}$ one heavy item will be at depth at least $2$. Also,
$C^6_{1,2} + C^6_{3,6} \ge (V+1) + (V+8) = 2V+9$,
because in an optimal tree for $I^6_{3,6}$ the first 
test must be an equal-to test for the heavy item,
and the remaining items of weight $1$ will have
depths at least $2,3,3$. The remaining expressions
are symmetric. Thus proves that $L^6_{1,6} = 3$.

To show that $L^6_{2,6} = 2$, we first compute
$C^6_{2,2} + C^6_{3,6} = 0 + (V+8)  = V+8$.
On the other hand, for $l= 3,4$, we have
$C^6_{2,l} + C^6_{l,6} \ge 2V$, since each
$V$ will be in a tree of size at least $2$,
while $C^6_{2,5} + C^6_{6,6} \ge 3V+0 = 3V$.

%%%%%%%%%%%%%%%%%%%%%%%%%%%%%%%%%%%%%%%%%%%%%%%%%%%%%%%%%%%%%%%%%%%%%%%%%%%%%%
%%%%%%%%%%%%%%%%%%%%%%%%%%%%%%%%%%%%%%%%%%%%%%%%%%%%%%%%%%%%%%%%%%%%%%%%%%%%%%

\subsection{Monotonicity without Boundary Keys}
\label{sec: Monotonicity without Boundary Keys}
%%%%%%%%%%%%%%%%%%%%%%%%%%%%%%%%%%%%%%%%%%%%%%%%%%%%%%%%%%%%%%%%%%%%%%%%%%%%%%%
%%%%%%%%%%%%%%%%%%%%%%%%%%%%%%%%%%%%%%%%%%%%%%%%%%%%%%%%%%%%%%%%%%%%%%%%%%%%%%%
%
%\subsection{Monotonicity without Boundary Keys}
%\label{sec: Monotonicity without Boundary Keys}
%\input{05-2_monotonicity_without_boundary_keys.tex}
%
%%%%%%%%%%%%%%%%%%%%%%%%%%%%%%%%%%%%%%%%%%%%%%%%%%%%%%%%%%%%%%%%%%%%%%%%%%%%%%%
%%%%%%%%%%%%%%%%%%%%%%%%%%%%%%%%%%%%%%%%%%%%%%%%%%%%%%%%%%%%%%%%%%%%%%%%%%%%%%%

We now make the following observation. In formula~(\ref{eqn: formula for S}),
in the cases when $l=j$ or $l = j-1$, instead of making a split with an
less-than test, we can make \emph{equal-to} tests to $i$ or ${j}$. 
But we know that, without loss of generality, equal-to tests are made
only to heaviest items. If $i$ or $j$ is the heaviest key, then the
option of doing the equal-to test will be included in the minimum~(\ref{eqn: formula for C})
anyway. If none of them is a heaviest key, then the corresponding equal-to tests
can be ignored. As a result, in~(\ref{eqn: formula for S}) we can only
run index $l$ from $i+1$ to $j-2$ (and assume that $j\ge i+4$). 

Motivated by this, we now introduce a definition of $S^h_{i,j}$, slightly
different from the one in the previous section.
$S^h_{i,j}$ is the minimum cost of a binary search tree
with the following two properties:
(i) the root of the tree is an less-than-comparison (split node),
and (ii) both the left and the right subtree of the root has at
least two nodes. Note that this definition implies that 
$S^h_{i,j} = \infty$ whenever $|I^h_{i,j}|\le 3$.

We now give a modified recurrence for $C^h_{i,j}$. If 
$|I^h_{i,j}|\le 1$ then $C^0_{i,j} = 0$. 
Assume $h \ge 1$ and $|I^h_{i,j}|\ge  2$ (in particular, $i < j$).
If $a_h\notin I^h_{i,j}$, then $C^h_{i,j} = C^{h-1}_{i,j}$,
because in this case $I^h_{i,j} = I^{h-1}_{i,j}$.
So assume now that $a_h\in I^h_{i,j}$.  Then:
\begin{align}
	C^h_{i,j} \;&=\; w^h_{i,j} + 
	\min \braced{ C^{h-1}_{i,j} , S^h_{i,j} },  \quad\textrm{where}  
	\label{eqn: formula for C modified}
	\\
	S^h_{i,j} \;&=\; 
	\min_{l = i+1,...,j-2}\braced{ C^h_{i,l} + C^h_{l+1,j}}.
	\label{eqn: formula for S modified} 
\end{align}
The earlier counter-example does not apply to this new formulation, because
now we only need the minimizer monotonicity to hold between locations $l=i+1$ and $l=j-2$.
(The counter-example to the quadrangle inequality is technically correct, but
it does not seem to matter for our new definition.)

%%%%%%%%%%%%%%%

\paragraph{Counter-example.}   
This does not work either. Consider the example below.

\smallskip
\begin{center}
\begin{tabular}{|c|c|c|c|c|c|}\hline
	1 & 2 & 3 & 4 & 5 & 6 \\ \hline
	0 & 2 & 2 & 0 & 1 & 1 \\ \hline
\end{tabular}
\end{center}
\smallskip

We claim that the minimizer monotonicity is violated in this 
example.  Recall that $L^h_{i,j}$ is the minimizer
for $S^h_{i,j}$, namely the
index $l\in\braced{i+1,...,j-2}$ such that $S^h_{i,j} = C^h_{i,l} + C^h_{l+1,j}$.   
(Note that now we don't allow values $l=i$ and $l=j-1$.)
To get the amortization to work, it would be enough to have
\begin{eqnarray*}
	L^h_{i,j-1} \;\le\; L^h_{i,j} \;\le\; L^h_{i+1,j}.
\end{eqnarray*}
We show that this inequality is false. 
In this example we have $L_{1,6} = 4 > 3 = L_{2,6}$. 
(Below
we use $h = 6$ in all expressions, so we will omit it.)

To prove this, let's consider $L_{1,6}$ first.
For $l=2$ we have $C_{1,2} +C_{3,6} = 2+7 = 9$.  
For $l=3$ we have $C_{1,3} + C_{4,6} = 6 + 3 = 9$.
For $l=4$ we have $C_{1,4} + C_{5,6} = 6+2 = 8$.
Thus $L_{1,6} = 4$, as claimed.

Next, consider $L_{2,6}$.  
For $l=3$ we have $C_{2,3} + C_{3,6} = 4 + 3 = 7$.
For $l=4$ we have $C_{2,4} + C_{5,6} = 6 + 2 = 8$.
Thus $L_{2,6} = 3$, as claimed.

Note also that this example remains valid even
if we don't restrict $l$ to be between $i+1$ and $j-2$.
For $L_{1,6}$, we have that  $C_{1,1} +C_{2,6}$
and $C_{1,5}+C_{6,6}$ are at least $9$.
For $L_{2,6}$, we get that
$C_{2,2}+C_{3,6} = 7$ and $C_{2,5}+C_{6,6} = 9$, so
$l = 2$ is tied with $l=3$. But we can
break the tie to make $l=3$ by slightly increasing the weight 
of key $6$ to $1+\epsilon$, for some small $\epsilon > 0$.

%%%%%%%%%%%%%%%%%%%%%%%%%%%%%%%%%%%%%%%%%%%%%%%%%%%%%%%%%%%%%%%%%%%%%%%%%%%%%%
%%%%%%%%%%%%%%%%%%%%%%%%%%%%%%%%%%%%%%%%%%%%%%%%%%%%%%%%%%%%%%%%%%%%%%%%%%%%%%

\subsection{Monotonicity For Constant Weights}
\label{sec: Monotonicity For Constant Weights}
%%%%%%%%%%%%%%%%%%%%%%%%%%%%%%%%%%%%%%%%%%%%%%%%%%%%%%%%%%%%%%%%%%%%%%%%%%%%%%%
%%%%%%%%%%%%%%%%%%%%%%%%%%%%%%%%%%%%%%%%%%%%%%%%%%%%%%%%%%%%%%%%%%%%%%%%%%%%%%%
%
%\subsection{Monotonicity For Constant Weights}
%\label{sec: Monotonicity For Constant Weights}
%\input{05-3_monotonicity_for_constant_weights.tex}
%
%%%%%%%%%%%%%%%%%%%%%%%%%%%%%%%%%%%%%%%%%%%%%%%%%%%%%%%%%%%%%%%%%%%%%%%%%%%%%%%
%%%%%%%%%%%%%%%%%%%%%%%%%%%%%%%%%%%%%%%%%%%%%%%%%%%%%%%%%%%%%%%%%%%%%%%%%%%%%%%

So far, our counter examples have used relatively unbalanced weights. Suppose that the weights were bounded in the range $\brackd{1, R}$, where $R$ is constant relative to $n$. 
For sufficiently large $n$, most of the comparisons in the search tree would have to be less-than tests, except with a few equal-to tests near the bottom. 
A natural conjecture would be that these perturbations near the bottom would not affect global structural properties of an optimal tree, and that
the quadrangle inequality could only fail for some constant number of diagonals near the main diagonal. As it turns out, this is false.

\paragraph{Counter-example.} Consider an instance of $n$ keys with a repeating pattern of weights $\parend{1, 3, 1, 3, \ldots}$, going from left to right. 
We claim without proof that if $n$ is an even number between $2^{p+1} +2$ and $3\cdot2^p$ for $p\ge 2$, we have

\[C_{1, n-1}^n + C_{2, n}^{n} = C_{2, n-1}^n + C_{1, n}^n + 1.\]

Furthermore, we claim that  $L_{2, n}^n = L_{3, n}^n + 1$  for even $n$ in $\brackd{2^{p+1} +2, 5\cdot 2^{p-1}}$ and $L_{1, n}^n = L_{1, n-1}^n -1$ for even $n$ in $\brackd{5\cdot 2^{p-1} + 2, 3\cdot2^{p}}$.
We have computationally verified these claims for $p\le8$. Intuitively, these instances correspond to balanced trees, where sub-trees rooted at equal-to tests are at the bottom. As shown in Figure~\ref{fig: balanced tree example}, these bottom sub-trees are the only things that change when we remove the boundary keys. Therefore the difference in cost, no matter the size of the tree, comes down to comparing the costs of small sub-trees, where equal-to tests are optimal and break monotonicity.

We also remark that this counter-example is not unique. Weight patterns like $\parend{1, 2, 3, 1, 2, 3, \ldots }$ and $\parend{3, 1, 1, 3, 1, 1, \ldots}$ yield similar results. In fact, monotonicity can break even by simply generating random integers in $\brackd{1, R}$, albeit in a less regular pattern. Figure~\ref{fig: QI Tables} illustrates some examples. Despite that, the error in monotonicity or the quadrangle inequality appears to be consistently small and not scale with $n$, suggesting that instances with constant weights may have an ``approximate monotonicity'' property
that could still speed up the DP algorithm.

\begin{figure}[t]
	\centering
	\begin{subfigure}[b]{0.49\textwidth}
		\centering
		\includegraphics[scale=0.75]{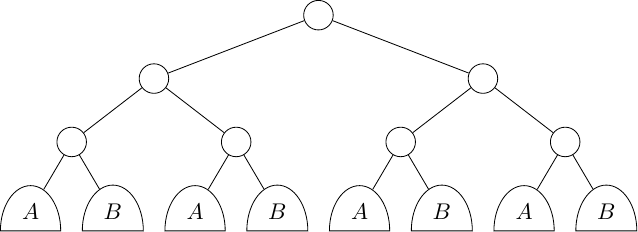}
		\caption{$I_{1, 24}^{24} $, $\text{cost} = 216$}\label{fig: balanced big tree}
	\end{subfigure}\hfill
	\begin{subfigure}[b]{0.49\textwidth}
		\centering
		\includegraphics[scale=0.75]{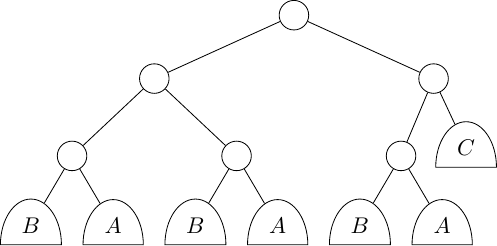}
		\caption{$I_{2, 23}^{24}$, $\text{cost} = 193$}\label{fig: balanced small tree}
	\end{subfigure}\\
	\begin{subfigure}[b]{0.49\textwidth}
		\centering
		\includegraphics[scale=0.75]{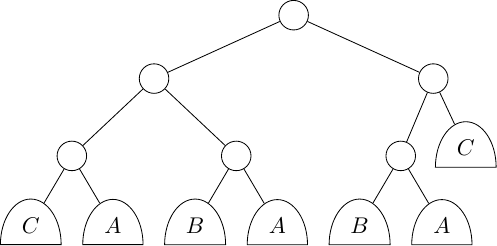}
		\caption{$I_{1, 23}^{24}$, $\text{cost} = 200$}\label{fig: balanced left tree}
	\end{subfigure}\hfill
	\begin{subfigure}[b]{0.49\textwidth}
		\centering
		\includegraphics[scale=0.75]{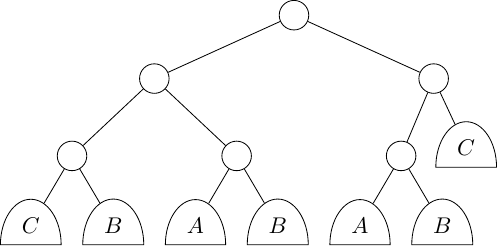}
		\caption{$I_{2, 24}^{24}$, $\text{cost} = 210$}\label{fig: balanced right tree}
	\end{subfigure}
	\caption{Optimal trees for the 24-key instance with weight pattern $\parend{1, 3, 1, 3, \ldots}$. $A$, $B$, and $C$ are sub-trees, where $A$ handles an instance with weights $\parend{1, 3, 1}$, $B$ handles $\parend{3, 1, 3}$, and $C$ handles $\parend{1, 3, 1, 3}$. Note that keys used in less-than tests, which are omitted from the figures, are different across these instances.} \label{fig: balanced tree example}
\end{figure}

\begin{figure}[t]
	\centering
	\begin{subfigure}[b]{0.32\textwidth}
		\includegraphics[width=\textwidth]{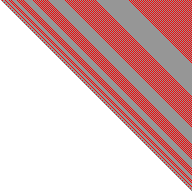}
		\caption{$\parend{1, 3, 1, 3, \ldots}$}\label{fig: QI Table 131313}
	\end{subfigure}~
	\begin{subfigure}[b]{0.32\textwidth}
		\includegraphics[width=\textwidth]{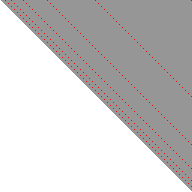}
		\caption{$\parend{3, 1, 2, 3, 1, 2, \ldots}$}\label{fig: QI Table 312312}
	\end{subfigure}~
	\begin{subfigure}[b]{0.32\textwidth}
		\includegraphics[width=\textwidth]{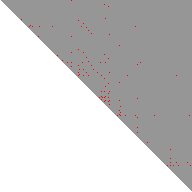}
		\caption{Random weights in $\braced{1, 2, 3}$}\label{fig: QI Table Random}
	\end{subfigure}~
	\caption{``QI tables'' for various weight patterns. The pixel $i$ rows down and $j$ columns to the right corresponds to $I_{i, j}^n$, where $n = 192$. We color $I_{i, j}^n$ gray if $C_{i+1, j}^n + C_{i, j-1}^n \le C_{i+1, j-1}^n + C_{i, j}^n$ or red if otherwise.} \label{fig: QI Tables}
\end{figure}

%%%%%%%%%%%%%%%%%%%%%%%%%%%%%%%%%%%%%%%%%%%%%%%%%%%%%%%%%%%%%%%%%%%%%%%%%%%%%%
%%%%%%%%%%%%%%%%%%%%%%%%%%%%%%%%%%%%%%%%%%%%%%%%%%%%%%%%%%%%%%%%%%%%%%%%%%%%%%

\subsection{Monotonicity Along Diagonals}
\label{sec: Monotonicity Along Diagonals}
%%%%%%%%%%%%%%%%%%%%%%%%%%%%%%%%%%%%%%%%%%%%%%%%%%%%%%%%%%%%%%%%%%%%%%%%%%%%%%%
%%%%%%%%%%%%%%%%%%%%%%%%%%%%%%%%%%%%%%%%%%%%%%%%%%%%%%%%%%%%%%%%%%%%%%%%%%%%%%%
%
%\subsection{Monotonicity Along Diagonals}
%\label{sec: Monotonicity Along Diagonals}
%\input{05-4_monotonicity_along_diagonals.tex}
%
%%%%%%%%%%%%%%%%%%%%%%%%%%%%%%%%%%%%%%%%%%%%%%%%%%%%%%%%%%%%%%%%%%%%%%%%%%%%%%%
%%%%%%%%%%%%%%%%%%%%%%%%%%%%%%%%%%%%%%%%%%%%%%%%%%%%%%%%%%%%%%%%%%%%%%%%%%%%%%%

Earlier we disproved the minimizer monotonicity property stating that
$L^h_{i,j-1} \le L^h_{i,j} \;\le\; L^h_{i+1,j}$. In this property, the minimizers
along the diagonal $d = j-i$ are sandwiched between the minimizers for the
previous diagonals. A weaker condition would be to require that 
that minimizers increase along  any given diagonal:

\begin{equation*}
	L^h_{i,j} \;\le\; L^h_{i+1,j+1}
\end{equation*}
We remark that this property does not necessarily imply a faster algorithm. 

\paragraph{Counter-example.} We now show that this property also fails. Consider the
example below. (This is a minor modification of the example in the previous section.)

\smallskip
\begin{center}
\begin{tabular}{|c|c|c|c|c|c|c|}\hline
	1 & 2 & 3 & 4 & 5 & 6 & 7 \\ \hline
	$\epsilon$ & 2 & 2 & 0 & 1 & 1 & 0 \\ \hline
\end{tabular}
\end{center}
\smallskip

Here, $\epsilon>0$ is very small. The same argument as in the previous section gives
us that $L_{1,6} = 4$. (The value of $C_{1,4} + C_{5,6}$ will be $8+O(\epsilon)$ while
for other splits the values are at least $9$.)
s
We now claim that $L_{2,7}\le 3$. We consider the split values $l=2,3,4,5,6$.
For $l=2$, $C_{2,2} + C_{3,7}\ge 11$.
For $l=3$, $C_{2,3} + C_{4,7}\le 7$.
For $l=4$, $C_{2,4} + C_{5,7}\ge 9$.
For $l =5$, $C_{2,5} + C_{6,7}\ge 10$.
For $l=6$, $C_{2,6} + C_{7,7}\ge 13$.
So $L_{2,7} = 3$.

Therefore we have $L_{1,6} = 4 > 3 = L_{2,7}$, showing that the
minimizers are not monotone along the diagonals of the dynamic programming table.

%%%%%%%%%%%%%%

\paragraph{Counter-example.} 
In the above example, the \emph{optimum tree} would actually start with an equal-to test.
To refine it, we take a step forward and present 
an instance where a less-than test in the root is necessary, while the minimizers still
do not meet the condition of monotonicity along the diagonals. 
The assigned weights are as follows.

\smallskip
\begin{center}
\begin{tabular}{|c|c|c|c|c|c|c|c|c|c|c|c|}\hline
	1 & 2 & 3 & 4 & 5 & 6 & 7 & 8 & 9 & 10 & 11 & 12 \\ \hline
	12 & 10 & 3 & 9 & 8 & 2 & 6 & 7 & 5 & 1 & 11 & 13 \\ \hline
\end{tabular}
\end{center}
\smallskip

Now we examine the \twoWCST's of $I_{1,11}^{12}$ and $I_{2,12}^{12}$. It can be simply verified 
that in both of these instances, the weight of the heaviest item is less than $\frac14$ of the 
total weight of the items. Thus, following Theorem~\ref{theorem: lambda-} (by Anderson et al.~\cite{Anderson2002}),
we observe that every \twoWCST should starts with a less-than test. Additionally, this example 
has another interesting property: For each $l \in \{2, 3, \dots, 10\}$, there exists a \twoWCST 
for $I_{1, l}^{12}$ and $I_{l + 1, 12}^{12}$ that initiates with less-than test. 
The same thing holds for $I_{2,12}^{12}$, which makes this set of items quite balanced. 
However, we have $L_{1,11}^{12} = 6 > 5 = L_{2,12}^{12}$ and these minimizers are unique.
This leads to a counter-example where the monotonicity of minimizers along the diagonals 
is violated (this result can be verified computationally). 
Figure~\ref{fig: counterexample monotonicity along diagonals} illustrates \twoWCST's of $I_{1,11}^{12}$
and $I_{2,12}^{12}$.

\begin{figure}[t]
	\centering
	\begin{subfigure}[b]{0.49\textwidth}
		\centering
		\includegraphics[scale=0.8, valign=m]{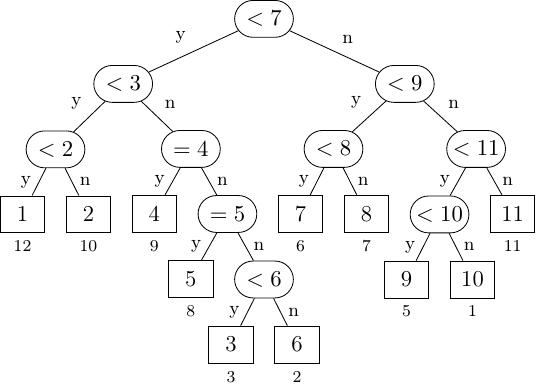}
		\caption{\twoWCST of $I_{1, 11}^{12}$, $L_{1,11}^{12} = 6$}
	\end{subfigure}
	\begin{subfigure}[b]{0.49\textwidth}
		\centering
		\includegraphics[scale=0.8, valign=m]{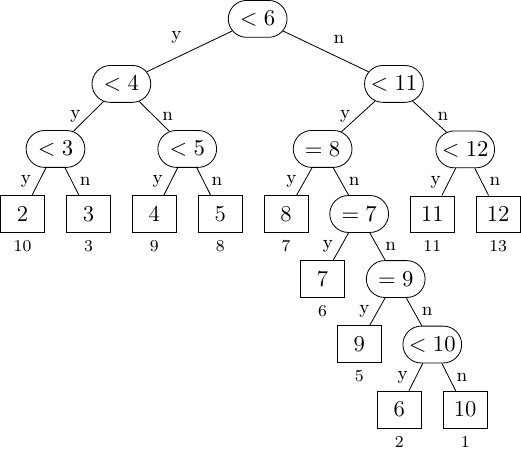}
		\caption{\twoWCST of $I_{2, 12}^{12}$, $L_{2,12}^{12} = 5$}
	\end{subfigure}
	\caption{Illustration of (a) $I_{1,11}^{12}$ and (b) $I_{2,12}^{12}$ in Section~\ref{sec: Monotonicity Along Diagonals}.} \label{fig: counterexample monotonicity along diagonals}
\end{figure}

%%%%%%%%%%%%%%%%%%%%%%%%%%%%%%%%%%%%%%%%%%%%%%%%%%%%%%%%%%%%%%%%%%%%%%%%%%%%%%
%%%%%%%%%%%%%%%%%%%%%%%%%%%%%%%%%%%%%%%%%%%%%%%%%%%%%%%%%%%%%%%%%%%%%%%%%%%%%%

\subsection{Marginal Advantage of Equal-to Tests}
\label{sec: Marginal Advantage of Equality Tests}
%%%%%%%%%%%%%%%%%%%%%%%%%%%%%%%%%%%%%%%%%%%%%%%%%%%%%%%%%%%%%%%%%%%%%%%%%%%%%%%
%%%%%%%%%%%%%%%%%%%%%%%%%%%%%%%%%%%%%%%%%%%%%%%%%%%%%%%%%%%%%%%%%%%%%%%%%%%%%%%
%
%\subsection{Marginal Advantage of Equality Tests}
%\label{sec: Marginal Advantage of Equal-to Tests}
%\input{05-5_marginal_advantage_of_equality_tests.tex}
%
%%%%%%%%%%%%%%%%%%%%%%%%%%%%%%%%%%%%%%%%%%%%%%%%%%%%%%%%%%%%%%%%%%%%%%%%%%%%%%%
%%%%%%%%%%%%%%%%%%%%%%%%%%%%%%%%%%%%%%%%%%%%%%%%%%%%%%%%%%%%%%%%%%%%%%%%%%%%%%%

Suppose that the heaviest key in $[i,j]$ is in a subinterval $[i',j'] \subsetneq [i,j]$.
And suppose that for $[i',j']$ the optimum choice is a less-than test. 
Does it necessarily imply that the optimum choice for $[i,j]$ is also a less-than test? 
Intuitively, that would show that the marginal advantage of equal-to tests
(by how much they are better than the best less-than test) decreases 
when the interval gets bigger. That's quite intuitive, since the relative weight
of this key to the total weight of the interval gets smaller, so it is less
likely to be used by the optimum tree in an equal-to test. (This is also consistent
with the threshold result in Theorem~\ref{theorem: lambda-}: If the weight of the
heaviest key becomes smaller than $\onefourth$ of the total interval weight, then
the tree uses a less-than-test.) We show that this intuition is not valid.

\paragraph{Counter-example.} Let $\alpha =11$, $\alpha' = 13$, and $\beta = 23$. Consider the following instance:

\smallskip
\begin{center}
	\begin{tabular}{|c|c|c|c|c|}\hline
		1 & 2 & 3 & 4 & 5   \\ \hline
		$\alpha$ & $\beta$ & $\beta$ & $\alpha'$ & $0$ \\ \hline
	\end{tabular}
\end{center}
\smallskip
For the smaller instance $[1,4]$ the optimum values for the equal-to test and the
less-than test are

\begin{equation*}
	C^=_{1,4} \;=\; \beta + 3\beta + 4 (\alpha + \alpha') \;=\; 141
	\quad\textrm{and}\quad
	C^<_{1,4} \;=\; C^<_{1,4} ( 2) \;=\; 2 ( 2\beta + \alpha + \alpha') \;=\; 140.
\end{equation*}
For the larger instance $[1,5]$ the optimum values for the equal-to test and the
less-than test are

\begin{equation*}
	C^=_{1,5} \;=\; \beta + 3\beta + 3\alpha' +  4\alpha  \;=\; 152
	\quad\textrm{and}\quad
	C^<_{1,5} \;=\; C^<_{1,5} ( 2) \;=\; 2 ( 2\beta + \alpha) + 3 \alpha' \;=\; 153.
\end{equation*}
So we have $C^=_{1,4} > C^<_{1,4}$, but $C^=_{1,5} < C^<_{1,5}$.

\begin{figure}[t]
	\centering
	\begin{subfigure}[b]{0.24\textwidth}
		\includegraphics[scale=0.8]{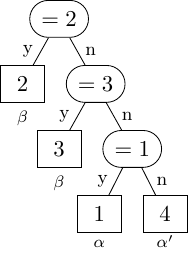}
		\caption{$C_{1, 4}^=$}
	\end{subfigure}
	\begin{subfigure}[b]{0.24\textwidth}
			\includegraphics[scale=0.8]{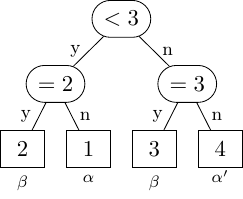}
			\caption{$C_{1, 4}^<$}
	\end{subfigure}
	\begin{subfigure}[b]{0.24\textwidth}
			\includegraphics[scale=0.8]{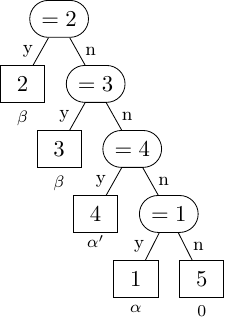}
			\caption{$C_{1, 5}^=$}
	\end{subfigure}
	\begin{subfigure}[b]{0.24\textwidth}
			\includegraphics[scale=0.8]{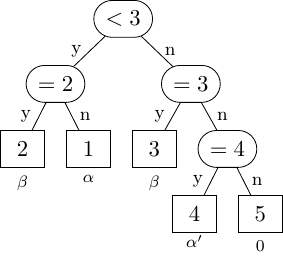}
			\caption{$C_{1, 5}^<$}
	\end{subfigure}
	\caption{Illustration of the counter-example in Section~\ref{sec: Marginal Advantage of Equality Tests}.}
	\label{fig: counterexample12}
\end{figure}

%%%%%%%%%%%%%%%%%%%%%%%%%%%%%%%%%%%%%%%%%%%%%%%%%%%%%%%%%%%%%%%%%%%%%%%%%%%%%%
%%%%%%%%%%%%%%%%%%%%%%%%%%%%%%%%%%%%%%%%%%%%%%%%%%%%%%%%%%%%%%%%%%%%%%%%%%%%%%

\subsection{Bounding Dynamic-Programming Minimizers by Weight}
\label{sec: bounding dynamic-programming minimizers by weight}
%%%%%%%%%%%%%%%%%%%%%%%%%%%%%%%%%%%%%%%%%%%%%%%%%%%%%%%%%%%%%%%%%%%%%%%%%%%%%%
%
%\subsection{Bounding Dynamic-Programming Minimizers by Weight}
%\label{sec: bounding dynamic-programming minimizers by weight}
%\input{05-6_bounding_dynamic-programming_minmimizers_by_weight.tex}
%
%%%%%%%%%%%%%%%%%%%%%%%%%%%%%%%%%%%%%%%%%%%%%%%%%%%%%%%%%%%%%%%%%%%%%%%%%%%%%%

Corollary~\ref{corollary: First Side Weight Threshold}  provides a potential algorithmic improvement as it extends \cite{Anderson2002}'s thresholds to less-than tests. More specifically it implies that the minimizer $l = L_{i, j}^h$ has to satisfy $\min\braced{w_{i, l}^h , w_{l+1, j}^h} \ge \onefourth w_{i,j}^h$. In other words, for any optimal tree rooted at a less-than test, the left and right branches of the root are relatively balanced, in the sense that the smaller branch is at least one fourth of the instance's total weight. A simple modification is to only search for minimizers where the side-weight is at least \(\frac{1}{4} w_{i, j}^h\).

For a given sub-problem \(I_{i, j}^h\), the ``refined interval'' \(\calI_{i, j}^h\) is defined as the set
	\[\calI_{i, j}^h = \braced{ i \le l < j \suchthat sw_{i, j}^h \parend{l} \ge \tfrac{1}{4} w_{i, j}^h}\]
where \(sw_{i, j}^h \parend{l} = \min\braced{w_{i, l}^h , w_{l+1 , j}^h}\).

From Corollary~\ref{corollary: First Side Weight Threshold}, $L_{i, j}^h$ must be in $\calI_{i, j}^h$. We can show that \(\calI_{i, j}^h\) is always an interval with no holes. By increasing the cut-point \(l\), the left-branch increases in weight and the right-branch decreases. So if \(l \in \calI_{i, j}^h\) but \(l+1 \notin \calI_{i, j}^h\), then the right branch is the side-branch of $l+1$, and moving the cut-point further right only decreases the side-weight as the right-branch gets smaller. Likewise happens for moving cut-points left. We can then represent each interval with end points \(\calI_{i, j}^h = [a_{i, j}^h , b_{i, j}^h]\). By pre-computing these intervals (or even computing them on an ad hoc basis in the recurrence) we can potentially narrow the number of sub-problems in the dynamic programming algorithm.

%%%%%%%%%%%%%%%%%%%%%%%

\begin{observation} \label{lemma: Computing Refined Intervals}
	For an instance of \(n\) keys, the set of refined intervals can be computed in \(O\parend{n^3 \log n}\) time.
\end{observation}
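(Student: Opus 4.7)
The plan is to compute the refined intervals $\calI_{i,j}^h = [a_{i,j}^h, b_{i,j}^h]$ for all $O(n^3)$ sub-problems by a preprocessing step followed by two binary searches per sub-problem.

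First, precompute the sub-problem weights $w_{i,j}^h$ for all valid triples $(i,j,h)$. This can be done in $O(n^3)$ total time: for each fixed $h$ from $1$ to $n$, use the identity $w_{i,j}^h = w_{i,j-1}^h + w_j \cdot \mathbb{1}[j \in \{a_1, \dots, a_h\}]$ to fill in the $n^2$ values in $O(n^2)$ time via standard prefix-sum-style updates. After this stage any weight $w_{i,j}^h$, and in particular the threshold $\tfrac{1}{4} w_{i,j}^h$, is available in $O(1)$.

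Next, exploit monotonicity. For fixed $i$ and $h$, the function $l \mapsto w_{i,l}^h$ is non-decreasing in $l$; for fixed $j$ and $h$, the function $l \mapsto w_{l+1,j}^h$ is non-increasing in $l$. Hence for each triple $(i,j,h)$ the set $\calI_{i,j}^h$ is precisely the intersection of $\{l : w_{i,l}^h \ge \tfrac{1}{4} w_{i,j}^h\}$ (a right-closed ray $[a_{i,j}^h, j-1]$) with $\{l : w_{l+1,j}^h \ge \tfrac{1}{4} w_{i,j}^h\}$ (a left-closed ray $[i, b_{i,j}^h]$). Both endpoints can therefore be located by binary search in the range $\{i, \dots, j-1\}$, each call costing $O(\log n)$ time with the $O(1)$ weight queries from the preprocessing. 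This confirms in passing the claim made in the text that $\calI_{i,j}^h$ is a contiguous interval with no holes.

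Summing the work over all $O(n^3)$ triples gives $O(n^3 \log n)$ for the binary searches, which dominates the $O(n^3)$ preprocessing cost, yielding the claimed bound. There is no real obstacle here; the only point worth checking carefully is the $O(n^3)$ preprocessing of $w_{i,j}^h$, to make sure one does not inflate it to $O(n^3 \log n)$ or worse by re-sorting or recomputing prefix sums inside the triple loop. Once that is done correctly, the monotonicity observation makes the two binary searches routine, and the total running time is $O(n^3 \log n)$ as stated.
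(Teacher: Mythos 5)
Your proof is correct, and it takes the same basic strategy as the paper (precompute all $w_{i,j}^h$ in $O(n^3)$ time, then binary search per sub-problem) but organizes the search more cleanly. The paper performs three rounds of bisection: first a somewhat nonstandard search that locates \emph{some} element of $\calI_{i,j}^h$ (using ``which branch is lighter'' to choose a direction), followed by two ordinary binary searches outward from that element to locate each endpoint. Your decomposition sidesteps the first round entirely: since $l \in \calI_{i,j}^h$ iff both $w_{i,l}^h \ge \tfrac{1}{4} w_{i,j}^h$ and $w_{l+1,j}^h \ge \tfrac{1}{4} w_{i,j}^h$, and the first quantity is non-decreasing in $l$ while the second is non-increasing, $\calI_{i,j}^h$ is the intersection of a right-ray and a left-ray, each of whose boundary is found by one standard monotone binary search. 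This is a small but genuine simplification: it needs only two searches rather than three, it handles the empty-interval case automatically (when $a_{i,j}^h > b_{i,j}^h$), and it yields the interval structure of $\calI_{i,j}^h$ as an immediate byproduct rather than as a separate argument (the paper proves this fact in the preceding paragraph of the text). The asymptotic bound $O(n^3 \log n)$ is of course the same either way.
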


\begin{proof}
	The algorithm is straightforward. We first compute the weights of each sub-problem in \(O\parend{n^3}\) time in the standard fashion. Then for each sub-problem \(I_{i, j}^h\), we perform three rounds of binary search.
	
	First, we find one element \(c \in \calI_{i, j}^h\), if there is any. To do this, we select the midpoint \(m\) of \(\brackd{i, j-1}\) and check if \(sw_{i, j}^h \parend{m} \ge \frac{1}{4} w_{i, j}^h\). If yes, then we are done. If not, we check which branch of the cut-point \(m\) is lighter. If the left branch is lighter, we search in the interval \(\brackd{m+1, j-1}\), as we need to move the cut point right to make the branch heavier. Likewise, we search in the interval \(\brackd{i, m-1}\) if the right branch is lighter. We continue until we converge to a single cut-point, and if said point's side weight is still too light, then \(R_{i, j}^h\) is empty and we are done.
	
	If we found a cut-point \(c \in \calI_{i, j}^h\), we use binary search for each end point \(a_{i, j}^h\) and \(b_{i, j}^h\). To find the lower bound, we start searching in the interval \(\brackd{i, c}\). We select the midpoint \(m\) and check if \(sw_{i, j}^h \parend{m} \ge \frac{1}{4} w_{i, j}^h\). If no, then we search in \(\brackd{m+1, c}\). If yes, we then check if \(sw_{i, j}^h \parend{m-1} \ge \frac{1}{4} w_{i, j}^h\). If no (or if \(m-1 < i\)), then \(m\) is our lower bound and we are done. If yes, then search in \(\brackd{i, m-1}\). The search works similarly for the upper bound.
\end{proof}

Notably, refined intervals may still be large in proportion to the interval $\brackd{i, j-1}$ (if all keys have the same weight, then $\barred{\calI_{i, j}^h} = \half\parend{j-i+1}$), so they can't significantly cut down on the time it takes to find minimizers. However, they may significantly reduce the number of necessary sub-problems we need to compute. Consider a modified recurrence which uses refined intervals and also the two heaviest-key thresholds $\lowlambda$ and $\highlambda$:

\begin{align*}
	C_{i, j}^h &= 	w_{i, j}^h +
	\begin{cases}
		C_{i, j}^{h-1}, & \text{if } a_h \in I_{i, j}^h \text{ and } w_{a_h} \ge \threesevenths w_{i, j}^h ,\\
		S_{i, j}^h, & \text{if } a_h \in I_{i, j}^h \text{ and } w_{a_h} < \onefourth w_{i, j}^h ,\\
		\min\braced{C_{i, j}^{h-1}, S_{i, j}^h}, & \text{else.}
	\end{cases} \\
	S_{i, j}^h &= \min_{l \in \calI_{i, j}^h}\braced{C_{i, l}^h + C_{l+1, j}^h}.
\end{align*}

In this recurrence, we consider only one type of comparison test when $w_{a_h}$ hits a certain threshold and limit the cut-points we consider for less-than tests.

\paragraph{Counter-example.} Consider the $n$-key instance with weights

\begin{align*}
	(X_1 &, G, X_2, G, X_3, G, X_0), \quad \text{where} \\
	X_k &= \parend{\gamma^{4i+k}}_{i=0}^{n/7 -1} = \parend{\gamma^{k}, \gamma^{4 + k}, \ldots, \gamma^{n/7-1+k}},\\
	G &= \parend{0}_{i=1}^{n/7}.
\end{align*}

Visually, we have a geometric sequence distributed modulo $4$ into smaller sub-sequences, with some ``garbage'' $G$ in between. Here we'll fix $\gamma = \threefourths$. Then the total weight of this instance is $4 -\epsilon$, where $\epsilon$ can be made arbitrarily small by considering larger $n$. We'll show that the recurrence defined above creates $\Omega\parend{n^3}$ sub-problems that take $\Omega\parend{n}$ time to compute. We do this by considering ``rounds of decisions''. 

For this instance, the heaviest weight is $1$, slightly above $\onefourth$ the total. In which case the recurrence has to consider both equal-to tests and less-than tests. This case occurs even after taking several equal-to tests, so long as we don't take too many. And after taking four equal-to tests (one from each $X_i$), we have an identical instance, but with slightly fewer
keys, whose weights are scaled by $\gamma$. In our first round, remove some fraction of keys using equal-to tests to generate $\Omega\parend{n}$ sub-problems, each comparable to the original instance but scaled by some $\gamma^i$. The exact fraction of $n$ keys doesn't matter, so long as the heaviest weight is still approximately $\onefourth$ the total weight of each sub-problem.

In the second round, for each sub-problem from round 1, the keys corresponding to $X_0$ makes up about $1-\gamma^4 \approx 0.684$ of the total weight. This means all of the cut-points in the garbage $G$ between $X_0$ and $X_3$ are plausible cuts in the sub-problem's refined interval. Then for each round 1 sub-problems, the recurrence must consider at least $n/7$ sub-problems corresponding to less-than tests in this garbage region. This creates $\Omega\parend{n^2}$ round 2 sub-problems.

In the third round, for each sub-problem from round 2, weights are comprised of keys from $X_1$, $X_2$, and $X_3$. Factoring out the $\gamma^i$ term created from round 1, each sub-problem has heaviest weight $\gamma$ and approximate total weight $w \approx4 - 1/\parend{1-\gamma^4}$ . In which case the heaviest weight takes approximately $\gamma/w \approx 0.296$ of the total weight, still between $\onefourth$ and $\threesevenths$. So each sub-problem still needs to consider less-than tests. The keys in $X_1$ have weight about $\gamma/\parend{1-\gamma^4} \approx 0.432$ , so all of the keys in the garbage $G$ between $X_1$ and $X_2$ are in the refined interval. So each round 2 sub-problem generates at least $n/7$ sub-problems corresponding to these cut-points.

We now have $\Omega\parend{n^3}$ sub-problems, each containing keys from $X_2$ or $X_3$ surrounded by zero-weight keys from the leftmost and rightmost garbage regions. Ignoring the $\gamma^i$ term from round 1, the heaviest weight is $\gamma^2$ and total weight approximately $w = 4 - (1+\gamma)/(1-\gamma^4) $ . The heaviest weight is then about $\gamma / w \approx 0.391$ of the total weight, still between $\onefourth$ and $\threesevenths$. Then keys in $X_3$ have weight about $\gamma^2 / (1-\gamma^4)$ , making about $0.571$ of the total weight. So the garbage between $X_2$ and $X_3$ are cut-points in the refined interval, and thus each round 3 sub-problem needs to evaluate $n/7$ cut-points, taking $\Omega\parend{n}$ time to compute. Therefore this instance takes $\Omega\parend{n^4}$ time to compute.

%These refined intervals could in theory speed up computation significantly by decreasing the runtime of each sub-problem, while also reducing the number of sub-problems reached in the recursion. However, it seems like such sets are not enough to beat \(O\parend{n^4}\) on all instances. \todo{Add actual counter example here.}

%%%%%%%%%%%%%%%%%%%%%%%%%%%%%%%%%%%%%%%%%%%%%%%%%%%%%%%%%%%%%%%%%%%%%%%%%%%%%%
%%%%%%%%%%%%%%%%%%%%%%%%%%%%%%%%%%%%%%%%%%%%%%%%%%%%%%%%%%%%%%%%%%%%%%%%%%%%%%
%%%%%%%%%%%%%%%%%%%%%%%%%%%%%%%%%%%%%%%%%%%%%%%%%%%%%%%%%%%%%%%%%%%%%%%%%%%%%%

\section{Algorithms for Bounded Weights}
\label{sec: algorithms for bounded weights}
%%%%%%%%%%%%%%%%%%%%%%%%%%%%%%%%%%%%%%%%%%%%%%%%%%%%%%%%%%%%%%%%%%%%%%%%%%%%%%
%
%\section{Algorithms for Bounded Weights}
%\label{sec: algorithms for bounded weights}
%\input{06_algorithms_for_bounded_weights.tex}
%
%%%%%%%%%%%%%%%%%%%%%%%%%%%%%%%%%%%%%%%%%%%%%%%%%%%%%%%%%%%%%%%%%%%%%%%%%%%%%%

In this section we show that if the ratio between the largest and smallest weight is $R$
then the optimum can be computed in time $O(n^3\log (nR))$.
We can assume that the weights are from the interval $[1,R]$.

%%%%%%%%%%%%%%%

\myparagraph{Constant $R$.}
Let's start with the case when $R$ is constant.
If $j - i + 1 > 4R$ then the maximum weight in $[i,j]$ will be less than $\frac{1}{4}$th 
of the total weight of $I^n_{i,j}$ (the total weight of interval $[i,j]$). 
By Theorem~\ref{theorem: lambda+ lower bound} (see~\cite{Anderson2002}),
the optimum tree will use a less-than test.
In other words, for intervals $[i,j]$ with $j - i + 1 > 4R$
we do not need to consider instances with holes (the keys from $[i,j]$ that are not in the instance).
The optimum cost for this interval is $C_{i,j} = C^n_{i,j} = S^n_{i,j}$.
So the algorithm can do this: (1) compute all optimal solutions $C_{i,j}$
for the diagonals with $j - i + 1 \le 4R$ even by brute
force (as they have each constant size),
and then (2) for the remaining diagonals use the recurrence
$C_{i,j} = 
	\min_{l = i,...,j-1}\braced{ C_{i,l} + C_{l+1,j}}$.
This gives us an $O(n^3)$-time algorithm.

%%%%%%%%%%%%%%%

\myparagraph{Arbitrary $R$.}
We now sketch an algorithm for $R$ that may be a growing function of $n$. (We assume that $R \le c^n$ for some $c$. Otherwise 
this algorithm will not beat $O(n^4)$.) 

Consider an instance $I^h_{i,j}$. Recall that the keys in $[i,j]$ that are not in $I^h_{i,j}$ are called
\emph{holes}. All holes in $I^h_{i,j}$ are heavier than non-holes. 
The naive dynamic programming considers instances $I^h_{i,j}$ with any number of holes,
that is $0 \le |I^h_{i,j}| \le j-i+1$. The idea is that we only need to consider
holes that are sufficiently heavy.

Represent each sub-problem slightly differently: by its interval $[i,j]$ and the number of holes $s$.
Denote by $v^s$ the weight of the instance for interval $[i,j]$ from which $s$ holes have been
removed. We have that $v^0 \le nR$, and Theorem~\ref{theorem: lambda+ lower bound}
gives us that we only need to remove the $s$th hole if its weight is at least
$\onefourth$th of the total weight of the interval $[i,j]$ with $s-1$ holes already removed.
(Otherwise, the optimum tree for this sub-instance has the less-than test, so
the sub-instances with more holes in this interval do not appear in the overall optimal solution.)
This implies that $v^s \le \threefourths \cdot v^{s-1}$. 
If we thus remove $t$ holes, we will get $v^t \le (\threefourths)^t \cdot nR$.
But since all weights are at least $1$, this implies that $t = O(\log(nR))$.

Summarizing, we only need to consider sub-problems parametrized by $[i,j]$ and $s\le \log(nR)$.
This gives us $O(n^2\log(nR))$ instances and the running time
$O(n^3\log(nR))$.

%%%%%%%%%%%%%%%%%%%%%%%%%%%%%%%%%%%%%%%%%%%%%%%%%%%%%%%%%%%%%%%%%%%%%%%%%%%%%%
%%%%%%%%%%%%%%%%%%%%%%%%%%%%%%%%%%%%%%%%%%%%%%%%%%%%%%%%%%%%%%%%%%%%%%%%%%%%%%
%%%%%%%%%%%%%%%%%%%%%%%%%%%%%%%%%%%%%%%%%%%%%%%%%%%%%%%%%%%%%%%%%%%%%%%%%%%%%%

\bibliographystyle{plain}
\bibliography{search_trees}

%%%%%%%%%%%%%%%%%%%%%%%%%%%%%%%%%%%%%%%%%%%%%%%%%%%%%%%%%%%%%%%%%%%%%%%%%%%%%%
%%%%%%%%%%%%%%%%%%%%%%%%%%%%%%%%%%%%%%%%%%%%%%%%%%%%%%%%%%%%%%%%%%%%%%%%%%%%%%
%%%%%%%%%%%%%%%%%%%%%%%%%%%%%%%%%%%%%%%%%%%%%%%%%%%%%%%%%%%%%%%%%%%%%%%%%%%%%%

\end{document}